        \def\theequation{\thesection.\arabic{equation}}
\newcommand{\tr}{{\rm tr}}
\newcommand{\ti}[1]{\tilde{#1}}
\newcommand{\mH}{{\mathcal H}}
\newcommand{\mO}{{\mathcal O}}
\newcommand{\vf}{\varphi}
\newcommand{\al}{\alpha}
\newcommand{\vth}{\vartheta}
\newcommand{\Mat}{ {\rm Mat}(N,\mathbb C) }
\newcommand{\mC}{\mathbb C}
\newcommand{\mZ}{\mathbb Z}
\newcommand{\mS}{\mathcal S}
\newcommand{\z}{{\zeta}}
\newtheorem{predl}{Proposition}
\newenvironment{proof}{\par\noindent{\bf Proof.}}{\hfill$\scriptstyle\blacksquare$}
\def\beq{\begin{equation}}
\def\eq{\end{equation}}
\def\p{\partial}
\newtheorem{theor}{Theorem}
\def\res{\mathop{\hbox{Res}}\limits}
\begin{document}

\setcounter{page}{1}

\begin{center}

\

\vspace{0mm}

{\Large{\bf  On the field analogue of elliptic spin Calogero-Moser
model:}}

\vspace{3mm}

{\Large{\bf
Lax pair and equations of motion
}}



 \vspace{12mm}

 {\Large {Andrei Zotov}}

  \vspace{10mm}

 {\em Steklov Mathematical Institute of Russian
Academy of Sciences,\\ Gubkina str. 8, 119991, Moscow, Russia}


{\em Institute for Theoretical and Mathematical Physics,\\ Lomonosov Moscow State University, Moscow, 119991, Russia}


 {\small\rm {e-mail: zotov@mi-ras.ru}}

\end{center}

\vspace{0mm}

\begin{abstract}
The Lax pair for the field analogue of the classical spin elliptic Calogero-Moser model is proposed.
Namely, using the previously known Lax matrix we suggest an ansatz for the accompany matrix.
The presented construction is valid when the matrix of spin variables ${\mathcal S}\in{\rm Mat}(N,\mathbb C)$
satisfies the condition ${\mathcal S}^2=c_0{\mathcal S}$ with some constant $c_0\in\mathbb C$.
It is proved that the Lax pair satisfies the Zakharov-Shabat equation with unwanted term, thus
providing equations of motion on the unreduced phase space. The unwanted term vanishes
after additional reduction. In the special case ${\rm rank}(\mathcal S)=1$ we show that
the reduction provides the Lax pair of the spinless field Calogero-Moser model obtained earlier
by Akhmetshin, Krichever and Volvovski.
\end{abstract}

%


\bigskip


\section{Introduction}\label{sec1}
\setcounter{equation}{0}
The spinless elliptic Calogero-Moser model \cite{Calogero2} (of $A_{N-1}$ type)
 is an example of $N$-body integrable system of classical finite-dimensional mechanics.
 Dynamics of particles is defined by the equations
of motion
  \beq\label{q02}
  \begin{array}{c}
  \displaystyle{
 {\dot q}_i={\ti b}p_i\,,\qquad  {\dot p}_i={\ti b}{c}^2\sum\limits_{k:k\neq
 i}^N\wp'(q_{i}-q_k)\,,\quad i=1,...,N\,,
 }
 \end{array}
 \eq
 where $p_i$, $q_i$, $i=1,...,N$ are canonically conjugated $\mC$-valued momenta and positions of
 particles\footnote{The Poisson brackets are $\{p_i,q_j\}=\delta_{ij}$, $\{p_i,p_j\}=\{q_i,q_j\}=0$},
 $c\in\mC$ is a coupling constant, $\wp(x)$ is the Weierstrass elliptic $\wp$-function\footnote{See Appendix for elliptic function definitions and properties} and ${\ti b}\neq 0$ is a free $\mC$-valued normalization constant.
 It was shown by I. Krichever \cite{Krich2} that equations (\ref{q02}) are represented in the Lax form
  \beq\label{q01}
  \begin{array}{c}
  \displaystyle{
 \dot{L}(z)+[L(z),M(z)]=0\,,\qquad L(z),M(z)\in\Mat\,,
 }
 \end{array}
 \eq
  where
 $z\in\mC$ is the spectral parameter.
 From (\ref{q01}) it follows that $H_n(z)=\tr(L^n(z))$, $n\in\mZ_+$ are generating functions of integrals of motion.
 Existence of the classical $r$-matrix structure guarantees the involution property $\{H_m(z),H_n(w)\}=0$, $m,n\in\mZ_+$.
 In this way the classical Liouville integrability of the model was established.

 The Calogero-Moser model (\ref{q02}) has the spin generalization \cite{GH,BAB}. In the classical mechanics
 this means existence of the ''spin'' component of the phase space, which is a factor space
  ${\mathcal O}//{\mathcal H}$, where $\mO$ is a coadjoint orbit (of ${\rm GL}_N$ Lie group) and $\mH\subset{\rm GL_N}$ is the Cartan subgroup in ${\rm GL}_N$. Dynamical variables are elements of a matrix $S\in\Mat$. The model is described on the unreduced phase space, that is when
  the spin component of the phase space is $\mO$. This space is equipped with the linear Poisson-Lie structure
  $\{S_{ij},S_{ml}\}=-S_{il}\delta_{mj}+S_{mj}\delta_{il}$, for $i,j,m,l=1,...,N$, and the Casimir functions
  are fixed\footnote{Fixation of the Casimir functions means fixation of eigenvalues of the matrix $S$.}.
  The equations of motion are as follows:
  \beq\label{q03}
  \begin{array}{c}
  \displaystyle{
{\dot q}_i={\ti b}p_i\,,\quad {\dot p}_i={\ti b}\sum\limits_{j:j\neq
i}^NS_{ij}S_{ji}\wp'(q_i-q_j)
 }
 \end{array}
 \eq
and
  \beq\label{q04}
  \begin{array}{c}
  \displaystyle{
{\dot S}_{ii}=0\,,\quad
 {\dot S}_{ij}={\ti b}\sum\limits_{m\neq i,j}^N
S_{im}S_{mj}(\wp(q_i-q_m)-\wp(q_j-q_m))\,,\ i\neq j\,.
 }
 \end{array}
 \eq
  At the same time the Lax equation on the unreduced phase space
  has a certain nontrivial right hand side \cite{BAB}:
  \beq\label{q05}
  \begin{array}{c}
  \displaystyle{
 \dot{L}(z)+[L(z),M(z)]={\rm unwanted\ term}\,.
 }
 \end{array}
 \eq
  The unwanted term vanishes on the reduced phase space. For this purpose
  one should perform  the Poisson reduction
  $\mO\rightarrow {\mathcal O}//{\mathcal H}$. It includes (the first class) constraints
  $\mu_i={\rm const}$, $i=1,...,N$
  \beq\label{q06}
  \begin{array}{c}
  \displaystyle{
 \mu_i=S_{ii}\,,\quad i=1,...,N
 }
 \end{array}
 \eq
together with some (gauge) fixation of the coadjoint action of $\mH$ on $\mO$\footnote{
Conditions (\ref{q06}) arise as moment map equations generated by the coadjoint action of $\mH$ on $\mO$.}.
The reduction changes the Poisson brackets and equations of motion. However it is usually not performed explicitly
since there is no any preferred way to fix the action of $\mH$. Notice also that
the dimension of an orbit $\mO$ depends on the level of fixation of ${\rm Spec}(S)$.
For example, in the minimal orbit case ${\rm Spec}(S)=(c_0,0,...,0)$, that is ${\rm rank}(S)=1$ and
the reduction kills all spin degrees of freedom reproducing the spinless model (\ref{q02}).

In the field case we deal with the field variables. The latter means that the momenta and coordinates of particles
turn into the set of fields $p_1(x),...,p_N(x)$ and $q_1(x),...,q_N(x)$, where $x\in\mathbb R$ is a coordinate on a real line or a circle. For definiteness we assume all the field variables be periodic functions on a circle $\varphi(x)=\varphi(x+2\pi)$. The canonical Poisson structure takes the form
$\{p_i(x),q_j(y)\}=\delta(x-y)\delta_{ij}$, $\{p_i(x),p_j(y)\}=\{q_i(x),q_j(y)\}=0$, $x,y\in\mathbb R$.
The field analogue of the Calogero-Moser model was introduced in \cite{Krich2} and \cite{LOZ} for ${\rm gl}_2$ case and then was extended to ${\rm gl}_N$ case in \cite{Krich22}. Equations of motion are given by the following set of partial differential equations for $q_i(t,x)$, $p_i(t,x)$, $i=1,...,N$:
  \beq\label{q07}
  \begin{array}{c}
  \displaystyle{
 {\dot q}_i
 =
 -2(kq_{i,x}+c)\Big(p_i-\frac{1}{Nc}\sum\limits_{j=1}^Np_j(kq_{j,x}+c)\Big)\,,
 }
 \end{array}
 \eq
  \beq\label{q08}
  \begin{array}{c}
  \displaystyle{
 {\dot p}_i
 =
 -k\p_x\bigg( p_i^2-2p_i\frac{1}{Nc}\sum\limits_{j=1}^Np_j(kq_{j,x}+c)+
 \frac{1}{2}\frac{k^3 q_{i,xxx}}{kq_{i,x}+c}-\frac{1}{4}\frac{k^4 q^2_{i,xx}}{(kq_{i,x}+c)^2} \bigg)-
  }
  \\
  \displaystyle{
-2\sum\limits_{j:j\neq i}^N \bigg(
 (kq_{j,x}+c)^3\wp'(q_{i}-q_j)-3k^2(kq_{j,x}+c)q_{j,xx}\wp(q_{i}-q_j)-k^3q_{j,xxx}\zeta(q_{i}-q_j)
 \bigg)\,,
 }
 \end{array}
 \eq
 where the short notation for partial derivatives is used: ${\dot \varphi}=\p_t\varphi(t,x)$; $\varphi_x=\partial_x\varphi(t,x)$, $\varphi_{xx}=\partial^2_x\varphi(t,x)$ e.t.c., and $\zeta(x)$ is the Weierstrass $\zeta$-function.
 The parameter $k\in\mathbb C$ is a factor preceding $\p_x$, and its power measures the number of $x$-derivatives.
 The limit $k\rightarrow 0$ corresponds to the limit to finite-dimensional mechanics, when all the fields
  become independent of $x$. In this limit
 equations (\ref{q07})-(\ref{q08}) become the ordinary differential equations (\ref{q02}) with the
 normalization constant $\ti b=-2c$.

 In the field case the Lax equation (\ref{q01}) is replaced with the Zakharov-Shabat zero curvature equation \cite{ZaSh,FT}:
 \beq\label{q09}
 \begin{array}{c}
  \displaystyle{
 \partial_{t}{U}(z)-k\partial_{x}{V}(z)+[{U}(z), {V}(z)]=0\,,\qquad {U}(z), {V}(z)\in\Mat\,.
  }
 \end{array}
\eq
 The infinite-dimensional analogue of the Liouville integrability is as follows. The integrals of motion
 come from the generating function $\tr(T^n(z,2\pi))$, where
 $
 T(z,x)={\rm Pexp}\Big( \frac{1}{k}\int\limits_0^x dy\,  U(z,y) \Big)
$
is the monodromy matrix. The involution property $\{\tr(T^m(z,2\pi)),\tr(T^n(w,2\pi))\}=0$ follows
from existence of the classical $r$-matrix structure of Maillet type. For the spinless 1+1 Calogero-Moser model
(\ref{q07})-(\ref{q08}) the existence of such $r$-matrix structure was recently proved in \cite{Z24}.

{\bf The purpose of the paper} is to describe the field analogue of the spin Calogero-Moser model
(\ref{q03})-(\ref{q04}). In fact, a part of this problem was already solved. The Lax matrix (the $U$-matrix)
for the field analogue of the spin Calogero-Moser model was derived in \cite{LOZ} using
the group-theoretical approach to Hitchin systems on the affine Higgs bundles.
It was also shown that the
constraints $\mu_i={\rm const}$ (\ref{q05}) underlying additional reduction are extended as
$\mu_i(x)={\rm const}$ with
  \beq\label{q10}
  \begin{array}{c}
  \displaystyle{
 \mu_i(x)=\mS_{ii}(x)-k\p_xq_i(x)\,,\quad i=1,...,N\,,
 }
 \end{array}
 \eq
where $\mS_{ii}$ are diagonal elements of the spin field matrix $\mS=\mS(x)\in\Mat$.
These spin variables has the Poisson brackets
 $\{\mS_{ij}(x),\mS_{ml}(y)\}=(-\mS_{il}(x)\delta_{mj}+\mS_{mj}(x)\delta_{il})\delta(x-y)$
 on the unreduced phase space -- a coadjoint orbit of the loop group
  $L({\rm GL}_N)$ of ${\mathcal C}^\infty$-maps
${\mathbb S}^1\rightarrow {\rm GL}_N$. This allows to describe the classical $r$-matrix
structure with unwanted term on the unreduced phase space \cite{Z24}.

In this paper the equations of motion are deduced. For
this purpose we suggest an ansatz for $V$-matrix satisfying the Zakharov-Shabat equation
with (unwanted term) non-trivial right hand side:
 \beq\label{q11}
 \begin{array}{c}
  \displaystyle{
 \partial_{t}{U}(z)-k\partial_{x}{V}(z)+[{U}(z), {V}(z)]={\rm unwanted\ term}\,.
  }
 \end{array}
\eq
 Similarly to the finite-dimensional case (\ref{q05}) the unwanted term vanishes on-shell the constraints $\mu_i(x)={\rm const}$, thus providing integrable model on the reduced phase space.
 Our construction of $U$-$V$ pair works in the special case, when the matrix of spin field variables
 $\mS$ satisfies the property
 \beq\label{q12}
 \begin{array}{c}
  \displaystyle{
\mS^2=c_0\mS
  }
 \end{array}
\eq
 with some constant $c_0\in\mC$, that is eigenvalues of $\mS$ are equal to either $0$ or $c_0$.
 The equations of motion coming from (\ref{q09}) are as follows:
  \beq\label{q7410}
  \begin{array}{c}
  \displaystyle{
\p_t p_i+k\p_x { d}_i+\sum\limits_{m\neq i}^N\mS_{im}\mS_{mi}(\mS_{ii}+\mS_{mm})\wp'(q_{im})-
\sum\limits_{m\neq i}^N\Big( \mS_{im}{H}_{mi}-{H}_{im}\mS_{mi} \Big)\wp(q_{im})=0
  }
 \end{array}
 \eq
 and
  \beq\label{q7510}
  \begin{array}{c}
  \displaystyle{
\p_t \mS_{ij} -k\p_x{\breve H}_{ij}-2c_0(\mS_{ii}-\mS_{jj})\mS_{ij}\wp(q_{ij})+
  }
  \\ \ \\
   \displaystyle{
+(p_i-p_j)H_{ij}+\sum\limits_{m\neq i,j}^N\Big((\mS_{im}{H}_{mj}-{H}_{im}\mS_{mj})\eta(q_i,q_m,q_j)+
  }
  \\
   \displaystyle{
+\sum\limits_{m\neq i,j}^N\mS_{im}\mS_{mj}
\Big( (\mS_{ii}+\mS_{mm})\wp(q_{im})-(\mS_{jj}+\mS_{mm})\wp(q_{jm})+(\mS_{ii}-\mS_{jj})\wp(q_{ij}) \Big)=0
  }
 \end{array}
 \eq
 for $i\neq j$ and $i,j=1,...,N$, where the function $\eta(q_i,q_m,q_j)$ is given in (\ref{q35}) and
     %
  \beq\label{q7311}
  \begin{array}{c}
  \displaystyle{
 H=-\frac{k}{c_0}\,[\mS,\mS_x]\in\Mat\,,\qquad {\breve H}=H-O\in\Mat\,,
  }
 \end{array}
 \eq
  \beq\label{q661}
  \begin{array}{c}
  \displaystyle{
 {O}_{ij}=\delta_{ij}2(p_i+\kappa)\mS_{ii}+(1-\delta_{ij})\Big((p_i+p_j+2\kappa)\mS_{ij}+
  \sum\limits_{m\neq i,j}^N\mS_{im}\mS_{mj}\eta(q_i,q_m,q_j)\Big)\,,
  }
 \end{array}
 \eq
  \beq\label{q651}
  \begin{array}{c}
  \displaystyle{
  d_i=p_i^2+2\kappa p_i-\sum\limits_{m\neq i}^N\mS_{im}\mS_{mi}\wp(q_{im})\,,\quad i=1,...,N\,,
  \qquad \kappa=-\frac{1}{\tr(\mS)}\sum\limits_{i=1}^N\mS_{ii}p_i\,.
  }
 \end{array}
 \eq
 Two requirements are fulfilled for the obtained equations of motion. On the one hand
 equations generalize the finite-dimensional case, that is
 (\ref{q03})-(\ref{q04}) follow in the $k\rightarrow 0$ limit.
 On the other hand the spinless model (\ref{q07})-(\ref{q08})
  is reproduced in the special case ${\rm rank}(\mS)=1$.

Let us also mention that the field Calogero-Moser model is related to a certain class of Landau-Lifshitz models
through IRF-Vertex relations \cite{AtZ1}, where the condition $\mS^2=c_0\mS$ plays the same important role \cite{AtZ2}. Also, the spinless field Calogero-Moser model can be extended to the field analogue of the
Ruijsenaars-Schneider model described in terms of semi-discrete Zakharov-Shabat equation \cite{ZZ}.

{\bf The paper is organized as follows.} In the next Section we recall the description of the
 classical finite-dimensional spin Calogero-Moser model and reduction to the spinless case.
 In Section 3 we briefly review the $U$-$V$ pair of the spinless field Calogero-Moser model from \cite{Krich22}.
 In Section 4 we consider a natural ansatz for $V$-matrix and derive the corresponding equations.
 However, a relation to the spinless model cannot be established at this stage. This problem is solved
 in Section 5, and the final answer for $V$-matrix and equations of motion are obtained. Finally, it is shown in
 Section 6 how the obtained $U$-$V$ pair is reduced to the one for the spinless field Calogero-Moser model.
 Definitions and properties of elliptic functions are given in the Appendix.

\section{Spin Calogero-Moser model in classical mechanics}\label{sec2}
\setcounter{equation}{0}
\paragraph{Unreduced phase space.} The classical spin Calogero-Moser model of ${\rm gl}_N$ type is described by the following Lax pair:
  \beq\label{q21}
  \begin{array}{c}
  \displaystyle{
L^{\hbox{\tiny{spin}}}_{ij}(z)=\delta_{ij}(p_i+S_{ii}E_1(z))+(1-\delta_{ij})S_{ij}\phi(z,q_{i}-q_j)\,,
 }
 \end{array}
 \eq
  \beq\label{q22}
  \begin{array}{c}
  \displaystyle{
  M^{\hbox{\tiny{spin}}}_{ij}(z)=-(1-\delta_{ij}){\ti b}S_{ij}f(z,q_i-q_j)\,,
 }
 \end{array}
 \eq
where the functions $E_1(z)$, $\phi(z,u)$ and $f(z,u)$ are defined in the Appendix.
This Lax pair satisfies the Lax equation with the unwanted term (\ref{q05}). Namely,
  \beq\label{q23}
  \begin{array}{c}
  \displaystyle{
 \dot{L}^{\hbox{\tiny{spin}}}(z)+[L^{\hbox{\tiny{spin}}}(z),M^{\hbox{\tiny{spin}}}(z)]=-{\ti b}\sum\limits_{i,j=1}^N E_{ij}\,(S_{ii}-S_{jj})S_{ij}E_1(z)f(z,q_{ij})\,.
 }
 \end{array}
 \eq
The latter equation provides the equations of motion (\ref{q03})-(\ref{q04}).
The r.h.s. of (\ref{q23}) vanishes on the constraints $S_{ii}={\rm const}$, $i=1,...,N$.
These are the first class constraints generated by the coadjoint action of the Cartan subgroup $\mH\subset {\rm GL}_N$
on the coadjoint orbit $\mO$ described by the spin matrix $S$ (with fixed eigenvalues).
Another $N$ constraints are some gauge fixation conditions -- fixation of the action of $\mH$.
Totally we have $2N$ conditions, and these are the second class constraints. One can perform
the Poisson reduction using the Dirac formula. The Dirac terms may change the Poisson brackets $\{S_{ij},S_{kl}\}$
and equations of motion. Notice that the final answer
(for equations of motion) depends on the choice of the gauge fixation
conditions, and there is no preferred way to choose them. This is why the classical spin
Calogero-Moser model is usually described at the level of the unreduced phase space.

\paragraph{Spinless model.}  Reduction to the spinless model (\ref{q02}) is performed as follows \cite{BAB}.
 Consider a special case when the matrix of spin
variables is of rank one:
\beq\label{q24}
  \begin{array}{c}
  \displaystyle{
 S_{ij}=\xi_i\psi_j\,.
 }
 \end{array}
 \eq
 Next, perform the gauge transformation
 %
  \beq\label{q25}
  \begin{array}{c}
  \displaystyle{
L^{\hbox{\tiny{spin}}}(z)\rightarrow g^{-1}L^{\hbox{\tiny{spin}}}(z)g\,,\qquad
M^{\hbox{\tiny{spin}}}(z)\rightarrow g^{-1}M^{\hbox{\tiny{spin}}}(z)g-g^{-1}{\dot g}
 }
 \end{array}
 \eq
 with
\beq\label{q26}
  \begin{array}{c}
  \displaystyle{
 g={\rm diag}(\xi_1,...,\xi_N)\in\Mat\,.
 }
 \end{array}
 \eq
 The action on the matrix $S$ is
\beq\label{q27}
  \begin{array}{c}
  \displaystyle{
 S_{ij}\rightarrow \xi_j\psi_j=S_{jj}\,,
 }
 \end{array}
 \eq
 and $S_{ii}={\rm const}$ due to (\ref{q04}), that is the spin degrees of freedom
 are gauged away. Indeed, since the Lax matrix defines all Hamiltonian flows (and equations of motion)
 we conclude that the model is independent of spin variables in this case.

 Due to (\ref{q25}) the $M$-matrix acquires additional diagonal part
\beq\label{q28}
  \begin{array}{c}
  \displaystyle{
-\Big(g^{-1}{\dot g}\Big)_{ij}=-\delta_{ij}\frac{\dot \xi_i}{\xi_i}\,.
 }
 \end{array}
 \eq
It can be computed through the Hamiltonian ${\ti b}H^{\hbox{\tiny{spin}}}$, where
  \beq\label{q29}
  \begin{array}{c}
  \displaystyle{
H^{\hbox{\tiny{spin}}}=\sum\limits_{i=1}^N\frac{p_i^2}{2}-\sum\limits_{i>j}^N S_{ij}S_{ji}\wp(q_i-q_j)
 }
 \end{array}
 \eq
 and the Poisson brackets $\{\xi_i,\eta_j\}=\delta_{ij}$. In this way the spinless model arise\footnote{Some more comments on the reduction to spinless model can be found in \cite{BAB}, see also \cite{Z24}.}.
 The Lax pair of the spinless model \cite{Kr}
  \beq\label{q291}
  \begin{array}{c}
  \displaystyle{
L^{\hbox{\tiny{CM}}}_{ij}(z)=\delta_{ij}(p_i+{c} E_1(z))+{c}(1-\delta_{ij})\phi(z,q_{ij})\,,\quad q_{ij}=q_i-q_j\,,
 }
 \end{array}
 \eq
  \beq\label{q292}
  \begin{array}{c}
  \displaystyle{
M^{\hbox{\tiny{CM}}}_{ij}(z)=-{\ti b}{c}\delta_{ij} d_i
-{\ti b}{c}(1-\delta_{ij})f(z,q_{ij})\,,\quad d_i=\sum\limits_{k:k\neq i}^N
\wp(q_{ik})
 }
 \end{array}
 \eq
satisfies the Lax equation (\ref{q01}) and provides equations (\ref{q02}).

Alternatively, in order to deduce the spinless model one can use the equations of motion (\ref{q04}).
In the case of ${\rm rank}(S)=1$ (\ref{q24}) we have
  \beq\label{q293}
  \begin{array}{c}
  \displaystyle{
{\dot S}_{ij}=\Big(\frac{\dot \xi_i}{\xi_i}+\frac{\dot \psi_j}{\psi_j}\Big)\xi_i\psi_j\,.
 }
 \end{array}
 \eq
On the other hand from (\ref{q04}) we have
  \beq\label{q294}
  \begin{array}{c}
  \displaystyle{
 {\dot S}_{ij}={\ti b}\xi_i\psi_j\sum\limits_{m\neq i,j}^N
\xi_m\psi_m(\wp(q_i-q_m)-\wp(q_j-q_m))\,,\ i\neq j\,.
 }
 \end{array}
 \eq
The reduction consists of constraints (\ref{q06}) $S_{ii}=c$ and some gauge fixing conditions, which
we choose for definiteness to be $\xi_i=\psi_i=c^{1/2}$, i.e. on-shell the constraints we have $S_{ij}=c$,
and from (\ref{q294}) one gets
  \beq\label{q295}
  \begin{array}{c}
  \displaystyle{
 {\dot S}_{ij}\Big|_{\rm on-shell}={\ti b}c^2\sum\limits_{m\neq i,j}^N
(\wp(q_i-q_m)-\wp(q_j-q_m))\,,\ i\neq j\,.
 }
 \end{array}
 \eq
Therefore, from (\ref{q293}) we conclude that
  \beq\label{q296}
  \begin{array}{c}
  \displaystyle{
 \frac{\dot \xi_i}{\xi_i}\Big|_{\rm on-shell}={\ti b}c^2\sum\limits_{m\neq i}^N
\wp(q_i-q_m)+{\dot \lambda}\,,
 }
 \end{array}
 \eq
where $\lambda(t)$ is some function coming from a freedom in definition of $\xi_i$, $\psi_j$:
$\xi_i\rightarrow e^\lambda\xi_i$, $\psi_j\rightarrow e^{-\lambda}\psi_j$.
Finally, plugging (\ref{q296}), (\ref{q28}) into (\ref{q25}) yields
  \beq\label{q297}
  \begin{array}{c}
  \displaystyle{
 \Big(g^{-1}M^{\hbox{\tiny{spin}}}(z)g-g^{-1}{\dot g}\Big)\Big|_{\rm on-shell}=M^{\hbox{\tiny{CM}}}(z)-1_N{\dot \lambda}
 }
 \end{array}
 \eq
The term $1_N{\dot \lambda}$ does not provide any input into the Lax equation, and it can be gauged away.

\section{Spinless field Calogero-Moser model}\label{sec3}
\setcounter{equation}{0}
 Here we recall main result of \cite{Krich22}. Below we use slightly different notations (different from \cite{Krich22}), normalization constants and
 gauge choice for $U$-$V$ pair, see \cite{Z24}. From now on $q_i=q_i(t,x)$, $p_i=p_i(t,x)$, $i=1,...,N$ and $\mS_{ij}=\mS_{ij}(x)$, $i,j=1,...,N$ are periodic fields on a circle, and the time dependence comes from equations
 of motion generated by the Zakharov-Shabat equation.

 Introduce notations
  \beq\label{q301}
  \begin{array}{c}
  \displaystyle{
 \al_i=(kq_{i,x}+c)^{1/2}\,, \quad i=1,...,N
 }
 \end{array}
 \eq
and
  \beq\label{q302}
  \begin{array}{c}
  \displaystyle{
  \kappa=-\frac{1}{Nc}\sum\limits_{j=1}^Np_j(kq_{j,x}+c)=-\frac{1}{Nc}\sum\limits_{j=1}^Np_j\al_j^2\,.
 }
 \end{array}
 \eq
In what follows we also use the short notation
  \beq\label{q30}
  \begin{array}{c}
  \displaystyle{
  q_{ij}=q_i-q_j
 }
 \end{array}
 \eq
as in (\ref{q291})-(\ref{q292}).
Consider the $U$-$V$ pair
  \beq\label{q31}
  \begin{array}{c}
  \displaystyle{
 U^{\hbox{\tiny{2dCM}}}_{ij}(z)=\delta_{ij}
 \Big( p_i+\al_i^2E_1(z)-k\frac{\al_{i,x}}{\al_i} \Big)
 +(1-\delta_{ij})\phi(z,q_i-q_j)\al_j^2\,,
  }
 \end{array}
 \eq
  \beq\label{q32}
  \begin{array}{c}
  \displaystyle{
 V^{\hbox{\tiny{2dCM}}}_{ij}(z)=\delta_{ij}
 \Big( q_{i,t}E_1(z)-Nc\al_i^2\wp(z)-m_i^0-\frac{\al_{i,t}}{\al_i} \Big)+
   }
  \\ \ \\
  \displaystyle{
 +(1-\delta_{ij})\Big(Ncf(z,q_i-q_j)-NcE_1(z)\phi(z,q_i-q_j)-m_{ij}\phi(z,q_i-q_j)\Big)\al_j^2\,,
  }
 \end{array}
 \eq
 where
  \beq\label{q33}
  \begin{array}{c}
  \displaystyle{
 m_i^0=p_i^2+2\kappa p_i+k^2\frac{\al_{i,xx}}{\al_i}-
 \sum\limits_{j:j\neq i}^N\Big(
 (2\al_j^4+\al_i^2\al_j^2)\wp(q_i-q_j)+4k\al_j\al_{j,x}\zeta(q_i-q_j)
 \Big)\,,
  }
 \end{array}
 \eq
 and
  \beq\label{q34}
  \begin{array}{c}
  \displaystyle{
 m_{ij}=p_i+p_j+2\kappa-k\frac{\al_{i,x}}{\al_i}+k\frac{\al_{j,x}}{\al_j}+
 \sum\limits_{l:l\neq i,j}^N \al_l^2\eta(q_i,q_l,q_j)\,,\quad i\neq j
  }
 \end{array}
 \eq
 with
  \beq\label{q35}
  \begin{array}{c}
  \displaystyle{
 \eta(z_1,z_2,z_3)=\zeta(z_1-z_2)+\zeta(z_2-z_3)+\zeta(z_3-z_1)\stackrel{(\ref{a05})}{=}
  }
  \\ \ \\
    \displaystyle{
=
 E_1(z_1-z_2)+E_1(z_2-z_3)+E_1(z_3-z_1)\,.
  }
 \end{array}
 \eq
 Then the Zakharov-Shabat equation (\ref{q09}) for (\ref{q31})-(\ref{q32}) provides
 the equations of motion (\ref{q07})-(\ref{q08}), which are rewritten in the notations (\ref{q301})-(\ref{q302})
 as
  \beq\label{q36}
  \begin{array}{c}
  \displaystyle{
 {\dot q}_i=-2\al_i^2(p_i+\kappa)\,,
 }
 \end{array}
 \eq
  \beq\label{q37}
  \begin{array}{c}
  \displaystyle{
 {\dot p}_i=-k\p_x\Big( p_i^2+2\kappa p_i+k^2\frac{\al_{i,xx}}{\al_i} \Big)
 -2\sum\limits_{j:j\neq i}^N
 \Big( \al_j^6\wp'(q_{ij})-6\al_j^3\al_{j,x}\wp(q_{ij})-k^2\p_x^2(\al_j^2)\zeta(q_{ij}) \Big)\,.
  }
 \end{array}
 \eq
 These equations are Hamiltonian with the Hamiltonian function
 %
 \beq\label{q38}
 \begin{array}{c}
  \displaystyle{
\mH^{\hbox{\tiny{2dCM}}}=\int\limits_{0}^{2\pi} d x H^{\hbox{\tiny{2dCM}}}(x)\,,\qquad
H^{\hbox{\tiny{2dCM}}}(x)=-\sum\limits_{i=1}^N p_i^2\al_i^2+Nc\kappa^2
+\sum\limits_{i=1}^N k^2\al_{i,x}^2+
  }
  \\
  \displaystyle{
+\frac{k}{2}\sum\limits_{i\neq j}^N\Big( \al_i\al_{j,x}-\al_j\al_{i,x}+c(\al_{i,x}-\al_{j,x}) \Big)\zeta(q_{ij})
+\frac{1}{2}\sum\limits_{i\neq j}^N\Big( \al_i^4\al_j^2+\al_i^2\al_j^4-c(\al_i^2-\al_j^2)^2 \Big)\wp(q_{ij})\,.
 }
 \end{array}
 \eq
In the limit to the finite-dimensional case $k=0$. Then $\kappa=\frac{1}{N}\sum\limits_{m=1}^Np_m$, $\al_i^2=c$,
and we come to (\ref{q02}) with ${\ti b}=-2c$.

\section{1+1 field spin Calogero-Moser model: ansatz for $U$-$V$ pair}\label{sec4}
\setcounter{equation}{0}
The $U$-matrix for the 1+1 field spin Calogero-Moser model was derived in \cite{LOZ} by
extending the corresponding moment map equation (defining relation for the Lax matrix)
to the case of loop group. The result is
  \beq\label{q40}
  \begin{array}{c}
  \displaystyle{
 U^{\hbox{\tiny{2dSpin}}}_{ij}(z)=
 \delta_{ij}(p_i+\mS_{ii}(x)E_1(z))+(1-\delta_{ij})\mS_{ij}(x)\phi(z,q_i(x)-q_j(x))\,,
  }
 \end{array}
 \eq
 and it has exactly the same form as in the finite dimensional case (\ref{q21}) but the additional
 constraint is different. It is now
  \beq\label{q41}
  \begin{array}{c}
  \displaystyle{
\mu_i(x)=c\,,\quad i=1,...,N
 }
 \end{array}
 \eq
 where $\mu_i(x)$ is (\ref{q10}) and $c\in\mC$ is some constant. Equivalently,
  \beq\label{q42}
  \begin{array}{c}
  \displaystyle{
\mS_{ii}(x)-k\p_xq_{i}(x)=c\quad\hbox{or}\quad \mS_{ii}(x)=\al_i^2(x)\,,\quad i=1,...,N\,.
 }
 \end{array}
 \eq
However, the group-theoretical approach of \cite{LOZ} does not provide any constructive approach for
explicit calculation of $V$-matrix. Here we use an ansatz, which is similar to the one
for Landau-Lifshitz equation \cite{Skl} and its ${\rm gl}_N$ generalization \cite{Z11,AtZ2}. While matrix $U$
has the first order pole at $z=0$, the matrix $V$ has the second and the first order poles
at $z=0$. The coefficient before the second order pole should be proportional to $\mS=\res\limits_{z=0}U(z)$ in order not to produce the third order pole from the commutator $[U(z),V(z)]$. The residue $\res\limits_{z=0}V(z)$
at the first order pole is some arbitrary matrix ${\ti H}\in\Mat$ to be defined from the Zakharov-Shabat equations.
Consider  the following $V$-matrix:
  \beq\label{q43}
  \begin{array}{c}
  \displaystyle{
  V_{ij}(z)=
 \delta_{ij}\Big({\ti d}_i+{\ti H}_{ii}E_1(z)+b\mS_{ii}\wp(z)\Big)+
 (1-\delta_{ij})\Big({\ti H}_{ij}\phi(z,q_{ij})+b\mS_{ij}F(z,q_{ij})\Big)\,,
  }
 \end{array}
 \eq
 where the function $F(z,q)$ is given by (\ref{a12}) and ${\ti H}={\ti H}(t,x)\in\Mat$ is some matrix
 to be defined.

\begin{theor}
The $U$-$V$ pair (\ref{q40}) and (\ref{q43}) satisfies the Zakharov-Shabat equation with the unwanted term
  \beq\label{q44}
  \begin{array}{c}
  \displaystyle{
\partial_{t}{U}^{\hbox{\tiny{2dSpin}}}(z)-k\partial_{x}{V}(z)+[{U}^{\hbox{\tiny{2dSpin}}}(z), {V}(z)]=
  }
  \\
  \displaystyle{
=\sum\limits_{i\neq j}^N E_{ij}\Big(b\mS_{ij}F'(z,q_{ij})+{\ti H}_{ij} f(z,q_{ij})\Big)
\Big(\mu_i(x)-\mu_j(x)\Big)\,,
  }
  \\
  \mu_i(x)=\mS_{ii}-k\p_xq_{i}
 \end{array}
 \eq
%
and provides the following set of equations:
  \beq\label{q45}
  \begin{array}{c}
  \displaystyle{
\p_t p_i-k\p_x {\ti d}_i+b\sum\limits_{m\neq i}^N\mS_{im}\mS_{mi}\wp'(q_{im})-
\sum\limits_{m\neq i}^N\Big( \mS_{im}{\ti H}_{mi}-{\ti H}_{im}\mS_{mi} \Big)\wp(q_{im})=0\,,
  }
 \end{array}
 \eq
  \beq\label{q46}
  \begin{array}{c}
  \displaystyle{
\p_t \mS_{ij} -k\p_x{\ti H}_{ij}+2b(\mS_{ii}-\mS_{jj})\mS_{ij}\wp(q_{ij})
+(p_i-p_j){\ti H}_{ij}-({\ti d}_i-{\ti d}_j)\mS_{ij}+
  }
  \\
   \displaystyle{
+\sum\limits_{m\neq i,j}^N\Big((\mS_{im}{\ti H}_{mj}-{\ti H}_{im}\mS_{mj})\eta(q_i,q_m,q_j)
 -b\mS_{im}\mS_{mj}(\wp(q_{im})-\wp(q_{mj}))\Big)=0
\quad\hbox{for}\ i\neq j\,,
  }
 \end{array}
 \eq
  \beq\label{q47}
  \begin{array}{c}
  \displaystyle{
\p_t \mS_{ii}-k\p_x{\ti H}_{ii}=0\,,
  }
 \end{array}
 \eq
  \beq\label{q48}
  \begin{array}{c}
  \displaystyle{
 \p_t q_i={\ti H}_{ii}+g(t,x)
  }
 \end{array}
 \eq
 and
  \beq\label{q49}
  \begin{array}{c}
  \displaystyle{
bk\p_x \mS=[\mS,{\ti H}-bP]\,,\quad P={\rm diag}(p_1,...,p_N)\,.
  }
 \end{array}
 \eq
 In (\ref{q46}) the function $\eta(q_i,q_m,q_j)$ from (\ref{q35}) is used.
\end{theor}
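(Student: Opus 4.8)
The plan is to verify the zero-curvature identity by direct substitution of the $U$-matrix (\ref{q40}) and the $V$-ansatz (\ref{q43}) into $\p_t U^{\hbox{\tiny{2dSpin}}}-k\p_x V+[U^{\hbox{\tiny{2dSpin}}},V]$, and then to decompose the resulting $\Mat$-valued expression in the basis of elliptic functions of the spectral parameter $z$, treating the still-undetermined $\ti H$ and $\ti d_i$ as independent coefficient data. For fixed $q_{ij}=q_i-q_j$ every entry of $U$, of $V$ and of their product is elliptic in $z$ with poles only at $z=0$, so each diagonal entry is a combination of $1$, $E_1(z)$, $\wp(z)$ (and a priori $\wp'(z)$), while each off-diagonal entry is a combination of $\phi(z,q_{ij})$, $f(z,q_{ij})$, $F(z,q_{ij})$ and $F'(z,q_{ij})$. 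As these functions are linearly independent over the coefficients depending on $t,x$, the identity (\ref{q44}) is equivalent to matching the two sides function by function, and this matching produces exactly the system (\ref{q45})--(\ref{q49}).

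First I would record the two kinematic contributions. Using $\p_t\phi(z,q_{ij})=(\dot q_i-\dot q_j)f(z,q_{ij})$, the term $\p_t U$ contributes $\dot p_i$ and $\dot{\mS}_{ii}E_1(z)$ on the diagonal and $\dot{\mS}_{ij}\phi(z,q_{ij})+\mS_{ij}(\dot q_i-\dot q_j)f(z,q_{ij})$ off it. In $-k\p_x V$ the derivative acts only on the coefficient fields, leaving the $z$-functions untouched, and yields $-k\p_x\ti d_i$, $-k\p_x\ti H_{ii}\,E_1(z)$, $-kb\p_x\mS_{ii}\,\wp(z)$ on the diagonal together with $-k\p_x\ti H_{ij}\,\phi-k\ti H_{ij}(q_{i,x}-q_{j,x})f-kb\p_x\mS_{ij}\,F-kb\mS_{ij}(q_{i,x}-q_{j,x})F'$ off it. Already here the blocks $\mS_{ii}$ and $kq_{i,x}$ that will assemble into $\mu_i=\mS_{ii}-kq_{i,x}$ appear, the former arriving through the commutator below and the latter through $-k\p_x V$.

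The bulk of the work is the commutator $[U,V]_{ij}=\sum_m\big(U_{im}V_{mj}-V_{im}U_{mj}\big)$, which I would split into $m=i$, $m=j$ and $m\neq i,j$. To return the products of $z$-functions to the above basis I would use the genus-one identities of the Appendix: the degeneration $\phi(z,u)\phi(z,-u)=\wp(z)-\wp(u)$ and its $u$- and $z$-derivatives, which feed the diagonal $\wp(q_{im})$, $\wp'(q_{im})$ terms of (\ref{q45}) and the off-diagonal $\wp(q_{ij})$ terms of (\ref{q46}); the shift relations expressing $E_1(z)\phi$, $E_1(z)F$, $\wp(z)\phi$ through $\phi,f,F,F'$; and the three-point addition formula for $\phi(z,q_{im})\phi(z,q_{mj})$, responsible for the $\eta(q_i,q_m,q_j)$-weighted sums over $m\neq i,j$ in (\ref{q46}) through (\ref{q35}). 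The crucial structural point justifying the ansatz is the cancellation of the third-order pole: the only $z^{-3}$ contribution to $[U,V]$ comes from $(\res_{z=0}U)$ times the double-pole coefficient of $V$, and since the latter equals $b\mS=b\,\res_{z=0}U$ by the choice of the $\wp(z)$- and $F$-terms in (\ref{q43}), this contribution is proportional to $[\mS,\mS]=0$; hence no $\wp'(z)$ term survives and the left-hand side lies in the span listed above.

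Collecting coefficients then reads off the system: the $z$-independent diagonal part gives (\ref{q45}), the diagonal $E_1(z)$ part gives (\ref{q47}), the diagonal $\wp(z)$ part together with the off-diagonal $z^{-2}$ coefficient of $F(z,q_{ij})$ assembles the matrix relation (\ref{q49}) $bk\p_x\mS=[\mS,\ti H-bP]$, the off-diagonal $\phi(z,q_{ij})$ part gives (\ref{q46}), and the off-diagonal $f$ and $F'$ parts are matched against the unwanted term while forcing (\ref{q48}). For this last point the $f(z,q_{ij})$ coefficient collects to $\mS_{ij}(\dot q_i-\dot q_j)+\ti H_{ij}(\mS_{ii}-\mS_{jj})-k\ti H_{ij}(q_{i,x}-q_{j,x})-\mS_{ij}(\ti H_{ii}-\ti H_{jj})$, and subtracting the declared unwanted coefficient $\ti H_{ij}(\mu_i-\mu_j)$ cancels the $(\mS_{ii}-\mS_{jj})$ and $(q_{i,x}-q_{j,x})$ pieces and leaves $\mS_{ij}[(\dot q_i-\ti H_{ii})-(\dot q_j-\ti H_{jj})]$, i.e. (\ref{q48}) with $g=\dot q_i-\ti H_{ii}$; the $F'$ channel is identical and supplies the $b\mS_{ij}F'(\mu_i-\mu_j)$ part of (\ref{q44}). \emph{The main obstacle} is the bookkeeping of the $m\neq i,j$ sums and of the feedback that the degeneration and shift identities throw into the $f$- and $F'$-channels: one must check that the products recombine with exactly the signs displayed in (\ref{q46}) and, above all, that every surviving off-diagonal $f$- and $F'$-contribution organizes into the single factor $\mu_i-\mu_j$ with no spurious remainder; once this is confirmed the rest is linear algebra in the elliptic-function basis.
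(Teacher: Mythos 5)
Your proposal follows essentially the same route as the paper's proof: direct substitution of (\ref{q40}) and (\ref{q43}) into the Zakharov--Shabat expression, reduction of all products via the Appendix identities, and coefficient matching in a basis of elliptic functions of $z$, with the same assignment of channels to equations (constant, $E_1(z)$ and $\wp(z)$ on the diagonal giving (\ref{q45}), (\ref{q47}) and the diagonal part of (\ref{q49}); the $\phi$, $f$ and double-pole channels off the diagonal giving (\ref{q46}), (\ref{q48}) plus the unwanted term, and the off-diagonal part of (\ref{q49})). The one caveat is that your off-diagonal basis $\{\phi,f,F,F'\}$ is four-dimensional whereas the commutator actually generates the five independent functions $\phi$, $f$, $E_1\phi$, $E_1 f$, $(E_1^2-\wp)\phi$ used by the paper --- $\wp(z)\phi$ and $E_1(z)F$ are \emph{not} individually expressible through your four functions --- so the paper's two separate identity checks on the $E_1 f$ and $(E_1^2-\wp)\phi$ coefficients are silently merged in your scheme into the single $F'$ channel; the merge is legitimate here because those coefficients do combine into $-c_0^{-1}\cdot b\,\mS_{ij}(\mu_i-\mu_j)$ times $F'$ (up to the $2\wp(q_{ij})\phi$ remainder feeding (\ref{q46})), but it is a consistency condition that should be verified rather than assumed.
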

\begin{proof}
The proof is by straightforward calculation using identities from the Appendix.
The diagonal part of (\ref{q44}) provides (\ref{q47}) as coefficient
before $E_1(z)$, the diagonal part of (\ref{q49}) as coefficient
before $\wp(z)$ and the equation (\ref{q45}) as a constant (in $z$) term. Expressions
in the off-diagonal part of (\ref{q44}) are expanded in linear combination of
the following (linearly independent) functions: $\phi(z,q_{ij})$, $f(z,q_{ij})$,
$E_1(z)\phi(z,q_{ij})$, $E_1(z)f(z,q_{ij})$ and $(E_1^2(z)-\wp(z))\phi(z,q_{ij})$.
The coefficients before $(E_1^2(z)-\wp(z))\phi(z,q_{ij})$ and $E_1(z)f(z,q_{ij})$ are identities
 due to definition of $\mu_i(x)$, the coefficient before $E_1(z)\phi(z,q_{ij})$ provides the off-diagonal part of
 (\ref{q49}), the coefficient before $\phi(z,q_{ij})$ provides (\ref{q46}) and
 the coefficient before $f(z,q_{ij})$ provides (\ref{q48}).
\end{proof}

In what follows we fix $g(t,x)=0$ in (\ref{q48}). Notice also that the condition (\ref{q47}) with ${\ti H}_{ii}={\dot q}_i$
is in agreement with the constraints $\mu_i(x)=c$ since (\ref{q47}) is rewritten as $\p_t\mu_i(x)=0$.

The equations of motion (\ref{q45}), (\ref{q46}), (\ref{q48}) are written in terms of unknown matrix ${\ti H}$,
which should be defined from (\ref{q49}). Consider $H\in\Mat$ satisfying
  \beq\label{q50}
  \begin{array}{c}
  \displaystyle{
bk\p_x \mS=[\mS, H]\,.
  }
 \end{array}
 \eq
Then ${\ti H}=H+bP$. Equation (\ref{q50}) appears also in the description of higher rank Landau-Lifshitz and 1+1 Gaudin type models (see e.g. \cite{AtZ2,Z11}). A possible way to solve it is to impose condition
  \beq\label{q51}
  \begin{array}{c}
  \displaystyle{
{\mathcal S}^2=c_0{\mathcal S}\,,
  }
 \end{array}
 \eq
 where $c_0\in\mC$ is some constant. Relation (\ref{q51}) means that the eigenvalues of matrix $S$ are equal
 to either $0$ or $c_0$. From a general construction presented in \cite{LOZ} it follows that eigenvalues
 of $\mS$ are conserved quantities (in fact, they are the Casimir functions). Therefore,
 the condition (\ref{q51}) is preserved on the equations of motion.

 From (\ref{q51}) it follows that $\mS_x\mS+\mS\mS_x=c_0\mS_x$ (where $\mS_x=\p_x\mS$). Then it is easy to verify that (\ref{q50}) has solution
  \beq\label{q52}
  \begin{array}{c}
  \displaystyle{
 H=\frac{bk}{c_0^2}\,[\mS,\mS_x]\,,
  }
 \end{array}
 \eq
 that is
  \beq\label{q53}
  \begin{array}{c}
  \displaystyle{
 {\ti H}=bP+\frac{bk}{c_0^2}\,[\mS,\mS_x]\,.
  }
 \end{array}
 \eq
 Its diagonal part provides via (\ref{q48}) (with $g(t,x)=0$)
  \beq\label{q54}
  \begin{array}{c}
  \displaystyle{
 {\dot q}_i=bp_i+\frac{bk}{c_0^2}\,[\mS,\mS_x]_{ii}\,,
  }
 \end{array}
 \eq
 and the off-diagonal part of $\ti H$ enters equations (\ref{q45})-(\ref{q46}).

 The diagonal elements ${\ti d}_i$ remain undefined. At the same time we of course need to
 reproduce the finite-dimensional equations (\ref{q03})-(\ref{q04}) when all the fields are independent
 of $x$ (equivalently, when $k=0$). When $k=0$ we have $H=0$, ${\ti H}=bP$ and
 $U^{\hbox{\tiny{spin}}}(z)=L^{\hbox{\tiny{spin}}}(z)$. If ${\ti d}_i(k=0)=0$ then
 $V(z)=E_1(z)L^{\hbox{\tiny{spin}}}(z)+M^{\hbox{\tiny{spin}}}(z)$ with $\ti b=b$.
 Obviously, redefinition $M^{\hbox{\tiny{spin}}}(z)\rightarrow E_1(z)L^{\hbox{\tiny{spin}}}(z)+M^{\hbox{\tiny{spin}}}(z)$ does not effect the Lax equation (\ref{q05}).
 In this way we come to (\ref{q03})-(\ref{q04}).

\section{Specifying the ansatz for $V$-matrix}\label{sec5}
\setcounter{equation}{0}

There is a problem with the above given description. It is not in agreement with the spinless case (\ref{q31})-(\ref{q37}). As will be explained below, the equation for ${\dot q}_i$ (\ref{q54}) can not be transformed to (\ref{q36})
through the reduction to the spinless model. Another disagreement is that for $V(z)$ (\ref{q43})
  \beq\label{q55}
  \begin{array}{c}
  \displaystyle{
 V(z)|_{k=0}=E_1(z)L^{\hbox{\tiny{spin}}}(z)+M^{\hbox{\tiny{spin}}}(z)|_{\ti b=b}\,,
  }
 \end{array}
 \eq
 while
  \beq\label{q56}
  \begin{array}{c}
  \displaystyle{
 V^{\hbox{\tiny{2dCM}}}(z)|_{k=0}=\Big(L^{\hbox{\tiny{CM}}}(z)\Big)^2
 -\frac{2}{N}\Big(\sum\limits_{m=0}^Np_m\Big)L^{\hbox{\tiny{CM}}}(z)+M^{\hbox{\tiny{CM}}}(z)|_{\ti b=-2c}
  }
 \end{array}
 \eq
up to constant matrices proportional to the identity matrix $1_N$.

\subsection{Another role of condition ${\mathcal S}^2=c_0{\mathcal S}$}
Besides solving (\ref{q50}) we are going to use the condition (\ref{q51}) to modify
the definition of $V$-matrix.
\begin{predl}
Suppose the condition (\ref{q51}) holds true. Then the matrix $V^0(z)\in\Mat$
  \beq\label{q57}
  \begin{array}{c}
  \displaystyle{
 V^0_{ij}(z)=-\Big[\Big(U^{\hbox{\tiny{2dSpin}}}(z)\Big)^2\Big]_{ij}
+
  }
  \\ \ \\
   \displaystyle{
 +\delta_{ij}(\mS_{ii})^2\Big(E_1^2(z)-\wp(z)\Big)
 +(1-\delta_{ij})(\mS_{ii}+\mS_{jj})\mS_{ij}f(z,q_{ij})
  }
 \end{array}
 \eq
 has the form of the matrix $V(z)$ (\ref{q43})
 with
  \beq\label{q58}
  \begin{array}{c}
  \displaystyle{
 b=-c_0\,.
  }
 \end{array}
 \eq
\end{predl}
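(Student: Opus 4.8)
The plan is to prove the proposition by direct computation: expand the matrix square $\big(U^{\hbox{\tiny{2dSpin}}}(z)\big)^2$, reduce every product of elliptic functions to the basis appearing in the ansatz (\ref{q43}), and then invoke $\mS^2=c_0\mS$ to collapse the resulting internal sums. Write $U(z)=U^{\hbox{\tiny{2dSpin}}}(z)$ and split $(U^2)_{ij}$ for $i\neq j$ into the ``endpoint'' part $U_{ii}U_{ij}+U_{ij}U_{jj}$ and the ``interior'' sum $\sum_{k\neq i,j}U_{ik}U_{kj}$, and for $i=j$ into $U_{ii}^2+\sum_{k\neq i}U_{ik}U_{ki}$. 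The only products that occur are $\phi(z,q_{ik})\phi(z,q_{kj})$, $\phi(z,q_{ij})E_1(z)$ and $E_1^2(z)$, each of which I rewrite with the Appendix identities. In the normalization used here these read $\phi(z,u)\phi(z,-u)=\wp(z)-\wp(u)$, the addition formula $\phi(z,u)\phi(z,v)=(E_1(u)+E_1(v))\phi(z,u+v)+F(z,u+v)$, and the degeneration $\phi(z,u)E_1(z)=E_1(u)\phi(z,u)+F(z,u)+f(z,u)$.

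For the diagonal entries I first use $\phi(z,u)\phi(z,-u)=\wp(z)-\wp(u)$ to turn $\sum_{k\neq i}\mS_{ik}\mS_{ki}\phi(z,q_{ik})\phi(z,q_{ki})$ into $\sum_{k\neq i}\mS_{ik}\mS_{ki}\big(\wp(z)-\wp(q_{ik})\big)$. Collecting the $E_1^2(z)$ and $\wp(z)$ contributions, the coefficient of $\wp(z)$ equals $-\big(\mS_{ii}^2+\sum_{k\neq i}\mS_{ik}\mS_{ki}\big)=-(\mS^2)_{ii}=-c_0\mS_{ii}$ once $\mS^2=c_0\mS$ is imposed, while the correction $+\mS_{ii}^2(E_1^2(z)-\wp(z))$ in (\ref{q57}) is exactly what cancels the spurious $-\mS_{ii}^2E_1^2(z)$ coming from $U_{ii}^2$. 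Thus $V^0_{ii}$ reduces to ${\ti d}_i+{\ti H}_{ii}E_1(z)+b\mS_{ii}\wp(z)$ with $b=-c_0$, which fixes ${\ti H}_{ii}=-2p_i\mS_{ii}$ and ${\ti d}_i=-p_i^2+\sum_{k\neq i}\mS_{ik}\mS_{ki}\wp(q_{ik})$.

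For the off-diagonal entries I apply the addition formula with $u=q_{ik}$, $v=q_{kj}$, $u+v=q_{ij}$, using $E_1(q_{ik})+E_1(q_{kj})=E_1(q_{ij})+\eta(q_i,q_k,q_j)$, to the interior sum, and the degeneration identity to the $E_1(z)$ factor produced by $U_{ii}U_{ij}+U_{ij}U_{jj}$. After substitution the coefficient of $F(z,q_{ij})$ in $(U^2)_{ij}$ is $\mS_{ij}(\mS_{ii}+\mS_{jj})+\sum_{k\neq i,j}\mS_{ik}\mS_{kj}=(\mS^2)_{ij}=c_0\mS_{ij}$, so $V^0_{ij}$ carries $-c_0\mS_{ij}F(z,q_{ij})=b\mS_{ij}F(z,q_{ij})$ with the same $b=-c_0$ as on the diagonal. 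The leftover $\phi(z,q_{ij})$ terms assemble into a single coefficient, which I read off as ${\ti H}_{ij}$, so that $V^0_{ij}$ indeed acquires the form ${\ti H}_{ij}\phi(z,q_{ij})+b\mS_{ij}F(z,q_{ij})$ of (\ref{q43}).

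The one genuinely load-bearing point — and the step I expect to require the most care — is the cancellation of the $f(z,q_{ij})$ contributions. These are generated \emph{only} by the $\phi(z,q_{ij})E_1(z)$ reduction, with coefficient $\mS_{ij}(\mS_{ii}+\mS_{jj})$, and they are annihilated precisely by the hand-added term $(1-\delta_{ij})(\mS_{ii}+\mS_{jj})\mS_{ij}f(z,q_{ij})$ in (\ref{q57}); this is the structural reason those correction terms sit in $V^0$. Checking that these coefficients match exactly, together with the collapse of the interior sums through $\mS^2=c_0\mS$ — which is what simultaneously makes the $\wp(z)$ coefficient on the diagonal and the $F(z,q_{ij})$ coefficient off the diagonal proportional to $\mS$ with one common constant $b=-c_0$ — is the heart of the argument. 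Everything else is bookkeeping over the linearly independent functions $1$, $E_1(z)$, $\wp(z)$, $\phi(z,q_{ij})$, $F(z,q_{ij})$, $f(z,q_{ij})$, whose coefficients I match term by term.
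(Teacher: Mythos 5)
Your proof follows the same route as the paper's: the paper simply records the expansion of $\big(U^{\hbox{\tiny{2dSpin}}}(z)\big)^2$ in formula (\ref{q59}) as ``verified directly'' and then substitutes $\mS^2=c_0\mS$, and all of your load-bearing claims are correct --- the cancellation of $E_1^2(z)$ on the diagonal and of $f(z,q_{ij})$ off the diagonal against the correction terms in (\ref{q57}), and the collapse of $\mS_{ii}^2+\sum_{k\neq i}\mS_{ik}\mS_{ki}$ and $(\mS_{ii}+\mS_{jj})\mS_{ij}+\sum_{k\neq i,j}\mS_{ik}\mS_{kj}$ to $c_0\mS_{ii}$ and $c_0\mS_{ij}$, which is what forces the single constant $b=-c_0$. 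One caveat: the two auxiliary identities you quote are misstated relative to the Appendix. By (\ref{a08}) and (\ref{a12}) the correct forms are $\phi(z,u)\phi(z,v)=\big(E_1(u)+E_1(v)-E_1(u+v)\big)\phi(z,u+v)+F(z,u+v)$ and $E_1(z)\phi(z,u)=F(z,u)+f(z,u)$; your versions carry an extra $E_1(u+v)\phi(z,u+v)$ (respectively $E_1(u)\phi(z,u)$). Used literally, these would add a spurious $(\mS^2)_{ij}E_1(q_{ij})\phi(z,q_{ij})=c_0\mS_{ij}E_1(q_{ij})\phi(z,q_{ij})$ to $(U^2)_{ij}$. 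This does not endanger the Proposition as stated, since (\ref{q43}) leaves the coefficient ${\ti H}_{ij}$ of $\phi(z,q_{ij})$ unspecified, but it would corrupt the identification of that coefficient with $-{\ti O}_{ij}$ from (\ref{q60}), which is exactly what the paper relies on afterwards (Propositions 2--3 and Theorem 2); with the corrected identities your computation reproduces ${\ti H}=-{\ti O}$, ${\ti d}_i=-p_i^2+\sum_{k\neq i}\mS_{ik}\mS_{ki}\wp(q_{ik})$ precisely.
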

\begin{proof}
The following statement is verified directly:
  \beq\label{q59}
  \begin{array}{c}
  \displaystyle{
\Big[\Big(U^{\hbox{\tiny{2dSpin}}}(z)\Big)^2\Big]_{ij}=
  }
  \\ \ \\
   \displaystyle{
=\delta_{ij}\Big( p_i^2-\sum\limits_{m\neq i}^N\mS_{im}\mS_{mi}\wp(q_{im})+{\ti O}_{ii}E_1(z)
  +(\mS_{ii})^2\Big(E_1^2(z)-\wp(z)\Big)+(\mS^2)_{ii}\wp(z)\Big)+
  }
  \\ \ \\
   \displaystyle{
+(1-\delta_{ij})\Big((\mS^2)_{ij}F(z,q_{ij})+{\ti O}_{ij}\phi(z,q_{ij})+(\mS_{ii}+\mS_{jj})\mS_{ij}f(z,q_{ij})\Big)\,,
  }
 \end{array}
 \eq
 where the matrix ${\ti O}\in\Mat$ is as follows:
  \beq\label{q60}
  \begin{array}{c}
  \displaystyle{
 {\ti O}_{ij}=\delta_{ij}2p_i\mS_{ii}+(1-\delta_{ij})\Big((p_i+p_j)\mS_{ij}+
  \sum\limits_{m\neq i,j}^N\mS_{im}\mS_{mj}\eta(q_i,q_m,q_j)\Big)\,,
  }
 \end{array}
 \eq
where $\eta(q_i,q_m,q_j)$ is the function (\ref{q35}).
 Plugging ${\mathcal S}^2=c_0{\mathcal S}$ we get:
  \beq\label{q591}
  \begin{array}{c}
  \displaystyle{
V^0_{ij}(z)
=-\delta_{ij}\Big( p_i^2-\sum\limits_{m\neq i}^N\mS_{im}\mS_{mi}\wp(q_{im})+{\ti O}_{ii}E_1(z)
  +c_0\mS_{ii}\wp(z)\Big)-
  }
  \\ \ \\
   \displaystyle{
-(1-\delta_{ij})\Big(c_0\mS_{ij}F(z,q_{ij})
+{\ti O}_{ij}\phi(z,q_{ij})\Big)\,.
  }
 \end{array}
 \eq
 This finishes the proof.
\end{proof}

The defined above matrix ${\ti O}$ (\ref{q60}) has an important property.
\begin{predl}
Suppose the condition (\ref{q51}) holds true. Then
  \beq\label{q61}
  \begin{array}{c}
  \displaystyle{
  [\mS,{\ti O}-c_0P]=0\,,
  }
 \end{array}
 \eq
 where $P={\rm diag}(p_1,...,p_N)$.
\end{predl}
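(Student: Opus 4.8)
The plan is to prove the matrix identity (\ref{q61}) by direct computation, exploiting the structure $\mS^2=c_0\mS$ from (\ref{q51}). I would begin by writing out the commutator $[\mS,{\ti O}-c_0P]$ entry by entry. Since $P$ is diagonal, the contribution $[\mS,P]_{ij}=\mS_{ij}(p_j-p_i)$, so the goal is to show $[\mS,{\ti O}]_{ij}=c_0\mS_{ij}(p_i-p_j)$ for all $i,j$. Splitting ${\ti O}$ from (\ref{q60}) into its diagonal piece $2p_i\mS_{ii}$ and its off-diagonal piece (containing the $(p_i+p_j)\mS_{ij}$ term and the $\eta$-sum) organizes the calculation into manageable parts.

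The key step is to carefully expand $[\mS,{\ti O}]_{ij}=\sum_m(\mS_{im}{\ti O}_{mj}-{\ti O}_{im}\mS_{mj})$ and collect terms. I expect the momentum-linear terms (those involving $p_i,p_j,p_m$) to combine, using $\mS^2=c_0\mS$ to turn sums like $\sum_m\mS_{im}\mS_{mj}$ into $c_0\mS_{ij}$, into precisely $c_0\mS_{ij}(p_i-p_j)$, matching the $c_0[\mS,P]$ contribution. The remaining terms are the purely $\eta$-dependent (momentum-free) pieces, which must cancel entirely on their own. These involve triple sums $\sum_{m,l}\mS_{im}\mS_{ml}\mS_{lj}\eta(\cdots)$ together with the diagonal-element contributions $\mS_{ii}$ and $\mS_{jj}$ hitting the off-diagonal $\eta$-sums.

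The main obstacle will be the cancellation of these momentum-free $\eta$-terms. The plan here is to first apply $\mS^2=c_0\mS$ to collapse one summation in the triple sums wherever an index is contracted, reducing them to double sums with explicit $\eta$-factors, and then to invoke the functional identities for $\eta(z_1,z_2,z_3)=E_1(z_1-z_2)+E_1(z_2-z_3)+E_1(z_3-z_1)$ from (\ref{q35}) in the Appendix. The antisymmetry $\eta(q_i,q_m,q_j)=-\eta(q_j,q_m,q_i)$ and the telescoping relations among the $E_1$-differences should force the surviving terms to cancel pairwise. I would handle the $i\neq j$ off-diagonal entries and the diagonal $i=j$ entries separately, since for $i=j$ the commutator should vanish trivially term by term, while for $i\neq j$ the nontrivial cancellations rely on the $\eta$-identities. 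The cleanest route is to recognize that ${\ti O}$ is essentially the $E_1(z)$-coefficient of $\big(U^{\hbox{\tiny{2dSpin}}}(z)\big)^2$ from (\ref{q59}), so that (\ref{q61}) can alternatively be read off from the fact that $\res_{z=0}U^{\hbox{\tiny{2dSpin}}}(z)=\mS$ commutes with the appropriate spectral coefficients of $U^2$ modulo the diagonal momentum shift; verifying this structural statement may shortcut the brute-force elliptic manipulation.
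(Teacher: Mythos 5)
Your approach matches the paper's: a direct entrywise computation of the commutator in which $\mS^2=c_0\mS$ collapses the contracted sums, the momentum-linear terms reproduce $c_0[\mS,P]$, and the residual $\eta$-terms cancel via the skew-symmetry of $\eta$ together with the four-point relation $\eta(q_m,q_l,q_j)-\eta(q_i,q_m,q_l)=\eta(q_i,q_l,q_j)-\eta(q_i,q_m,q_j)$ (the paper's (\ref{q62}), your ``telescoping'' identity); your closing remark about reading (\ref{q61}) off from two computations of $[U^{\hbox{\tiny{2dSpin}}},V^0]$ is likewise exactly the indirect argument the paper sketches immediately after the proposition. One small correction to your intermediate target: since $[\mS,P]_{ij}=\mS_{ij}(p_j-p_i)$, you need $[\mS,{\ti O}]_{ij}=c_0\mS_{ij}(p_j-p_i)$, not $c_0\mS_{ij}(p_i-p_j)$.
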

\begin{proof}
The proof is based on skew-symmetry of the function $\eta(q_i,q_m,q_j)$ (\ref{q35}) with respect to permutation of
any pair of arguments. The latter follows from (\ref{a05})-(\ref{a06}). Also,
it is convenient to use a simple relation
  \beq\label{q62}
  \begin{array}{c}
  \displaystyle{
  \eta(q_m,q_l,q_j)-\eta(q_i,q_m,q_l)=\eta(q_i,q_l,q_j)-\eta(q_i,q_m,q_j)\,,
  }
 \end{array}
 \eq
 which follows from the definition (\ref{q35}).
\end{proof}

In fact, the relation (\ref{q61}) can be proved indirectly. We have two different representations (\ref{q57})
and (\ref{q591}) for the matrix $V^0(z)$. One can use both representations to compute the commutator $[U^{\hbox{\tiny{2dSpin}}},V^0(z)]$. For computation of the commutator in representation (\ref{q591})
one should use identities (\ref{a07})-(\ref{a11}), while for the representation (\ref{q57})
formula (\ref{a14}) should be used. By comparing both calculations one gets two different representations for
certain expressions. In this way (\ref{q61}) follows as well as two more nontrivial relations.
\begin{predl}
Suppose the condition (\ref{q51}) is valid.  Then the following relations hold true:
  \beq\label{q611}
  \begin{array}{c}
  \displaystyle{
  c_0\sum\limits_{m\neq i}^N\mS_{im}\mS_{mi}\wp'(q_{im})-
  \sum\limits_{m\neq i}^N\Big(\mS_{im}{\ti O}_{mi}-{\ti O}_{im}\mS_{mi}\Big)\wp(q_{im})
  =-\sum\limits_{m\neq i}^N\mS_{im}\mS_{mi}(\mS_{ii}+\mS_{mm})\wp'(q_{im})
  }
 \end{array}
 \eq
for any $i=1,...,N$ and
  \beq\label{q612}
  \begin{array}{c}
  \displaystyle{
  (p_i-p_j){\ti O}_{ij}-(d_i-d_j)\mS_{ij}+
  \sum\limits_{m\neq i,j}^N\Big( \mS_{im}{\ti O}_{mj}-{\ti O}_{im}\mS_{mj} \Big)\eta(q_i,q_m,q_j)
    }
  \\
   \displaystyle{
   -c_0\sum\limits_{m\neq i,j}^N \mS_{im}\mS_{mj}(\wp(q_{im})-\wp(q_{mj}))
   = (\mS_{ii}-\mS_{jj})\wp(q_{ij})\sum\limits_{m\neq i,j}^N\mS_{im}\mS_{mj}+
      }
  \\
   \displaystyle{
  + \sum\limits_{m\neq i,j}^N\mS_{im}\mS_{mj}
\Big( (\mS_{ii}+\mS_{mm})\wp(q_{im})-(\mS_{jj}+\mS_{mm})\wp(q_{jm}) \Big)
  \quad\hbox{for}\ i\neq j\,.
  }
 \end{array}
 \eq
\end{predl}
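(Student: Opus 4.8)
The plan is to evaluate the commutator $[U^{\hbox{\tiny{2dSpin}}}(z),V^0(z)]$ twice, once from each of the two expressions (\ref{q57}) and (\ref{q591}) for the single matrix $V^0(z)$ (their equality was established in the previous Proposition), and then to equate the two results coefficient by coefficient in $z$. Since no $x$- or $t$-derivatives enter, the outcome is a family of purely algebraic identities in $\mS,p,q$; besides (\ref{q611}) and (\ref{q612}) the comparison also reproduces (\ref{q61}), as announced in the remark following that relation.

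First I would use (\ref{q57}). Writing $V^0(z)=-(U^{\hbox{\tiny{2dSpin}}}(z))^2+R(z)$ with the explicit remainder $R_{ij}(z)=\delta_{ij}(\mS_{ii})^2(E_1^2(z)-\wp(z))+(1-\delta_{ij})(\mS_{ii}+\mS_{jj})\mS_{ij}f(z,q_{ij})$, and using $[U^{\hbox{\tiny{2dSpin}}},(U^{\hbox{\tiny{2dSpin}}})^2]=0$, one has $[U^{\hbox{\tiny{2dSpin}}},V^0]=[U^{\hbox{\tiny{2dSpin}}},R]$. On the diagonal only the off-diagonal entries of $U^{\hbox{\tiny{2dSpin}}}$ and $R$ contribute, and the Wronskian relation $\phi(z,u)f(z,-u)-f(z,u)\phi(z,-u)=\wp'(u)$ makes the $z$-dependence drop out, giving the $z$-independent value $[U^{\hbox{\tiny{2dSpin}}},R]_{ii}=\sum_{m\neq i}(\mS_{ii}+\mS_{mm})\mS_{im}\mS_{mi}\wp'(q_{im})$. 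Off the diagonal I would reduce the products $\phi\cdot f$, $\phi\cdot(E_1^2-\wp)$ and $E_1\cdot f$ using (\ref{a14}) and the elementary identities of the Appendix, collecting each entry in the basis $\phi(z,q_{ij}),\,f(z,q_{ij}),\,E_1(z)\phi(z,q_{ij}),\,E_1(z)f(z,q_{ij}),\,(E_1^2(z)-\wp(z))\phi(z,q_{ij})$.

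Next I would recompute $[U^{\hbox{\tiny{2dSpin}}},V^0]$ from (\ref{q591}), where $V^0$ is in the $V$-form (\ref{q43}) with $\ti{H}=-\ti{O}$ and $b=-c_0$. The decisive step is to expand, for $m\neq i,j$, the products $\phi(z,q_{im})\phi(z,q_{mj})$, $\phi(z,q_{im})F(z,q_{mj})$ and $F(z,q_{im})\phi(z,q_{mj})$ by the addition/product identities (\ref{a07})--(\ref{a11}); concretely $\phi(z,q_{im})\phi(z,q_{mj})=\phi(z,q_{ij})(E_1(z)+\eta(q_i,q_m,q_j))-f(z,q_{ij})$, so the $E_1(z)$ feeds the $E_1(z)\phi(z,q_{ij})$ channel while the bracket $E_1(q_{im})+E_1(q_{mj})-E_1(q_{ij})$ is exactly $\eta(q_i,q_m,q_j)$ of (\ref{q35}). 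Matching the two calculations then splits into three channels: the coefficient of $E_1(z)\phi(z,q_{ij})$ returns (\ref{q61}); the $z$-independent part of the diagonal equates the value $\sum_{m\neq i}(\mS_{ii}+\mS_{mm})\mS_{im}\mS_{mi}\wp'(q_{im})$ coming from (\ref{q57}) with the $\wp$- and $\wp'$-terms produced by the $\phi\phi$ and $\phi F$ products in (\ref{q591}), which is precisely (\ref{q611}); and the coefficient of $\phi(z,q_{ij})$ off the diagonal yields (\ref{q612}).

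The main obstacle is the bookkeeping on the (\ref{q591})-side: one must check that all spurious second- and higher-order poles in $z$ cancel and, more delicately, that the triple sums over $m\neq i,j$ reorganize into the antisymmetric $\eta$ and into the $\wp(q_{im})-\wp(q_{jm})$ structure on the right-hand side of (\ref{q612}). This rests on the skew-symmetry of $\eta$ and the shift relation (\ref{q62}), exactly as in the proof of (\ref{q61}), together with $\mS^2=c_0\mS$ from (\ref{q51}) used to replace $(\mS^2)_{ij}$ by $c_0\mS_{ij}$ wherever it arises. As a cross-check, and an alternative route that bypasses the quadratic representation altogether, each of (\ref{q611}) and (\ref{q612}) can be proved directly by inserting the definition (\ref{q60}) of $\ti{O}$ into the combinations $\mS_{im}\ti{O}_{mj}-\ti{O}_{im}\mS_{mj}$ and collapsing the resulting $\eta$-weighted sums with (\ref{q62}) and (\ref{q51}).
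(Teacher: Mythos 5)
Your proposal is correct and follows essentially the same route as the paper: the text preceding Proposition 3 indicates that (\ref{q611}), (\ref{q612}) (and (\ref{q61})) arise precisely by computing $[U^{\hbox{\tiny{2dSpin}}}(z),V^0(z)]$ from each of the two representations (\ref{q57}) and (\ref{q591}) --- using (\ref{a14}) for the former and (\ref{a07})--(\ref{a11}) for the latter --- and comparing coefficients in the basis of elliptic functions. Your added observations (that $[U,(U)^2]=0$ reduces the first computation to $[U,R]$, and the identification of the $\wp'$- and $\phi(z,q_{ij})$-channels with (\ref{q611}) and (\ref{q612}) respectively) are accurate and consistent with the paper's sketch.
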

We will use these relations below to transform equations of motion.

\subsection{The final answer for $V$-matrix and equations of motion}
Introduce the matrix  $V^{\hbox{\tiny{2dSpin}}}\in\Mat$ with elements
  \beq\label{q63}
  \begin{array}{c}
  \displaystyle{
 V^{\hbox{\tiny{2dSpin}}}_{ij}(z)=V^0_{ij}(z)
 -2\kappa U^{\hbox{\tiny{2dSpin}}}_{ij}(z)+
\delta_{ij}H_{ii}E_1(z)
 +(1-\delta_{ij})H_{ij}\phi(z,q_{ij})
  }
 \end{array}
 \eq
or, equivalently,
  \beq\label{q64}
  \begin{array}{c}
  \displaystyle{
 V^{\hbox{\tiny{2dSpin}}}(z)=-\delta_{ij}\Big( d_{i}+{ O}_{ii}E_1(z)
  +c_0\mS_{ii}\wp(z)\Big)-(1-\delta_{ij})\Big(c_0\mS_{ij}F(z,q_{ij})+{O}_{ij}\phi(z,q_{ij})\Big)+
  }
  \\ \ \\
   \displaystyle{
    +\delta_{ij}H_{ii}E_1(z)
 +(1-\delta_{ij})H_{ij}\phi(z,q_{ij})  \,,
  }
 \end{array}
 \eq
where
  \beq\label{q65}
  \begin{array}{c}
  \displaystyle{
  d_i=p_i^2+2\kappa p_i-\sum\limits_{m\neq i}^N\mS_{im}\mS_{mi}\wp(q_{im})\,,\quad i=1,...,N
  }
 \end{array}
 \eq
and the matrix $O$ is defined as
  \beq\label{q66}
  \begin{array}{c}
  \displaystyle{
 {O}_{ij}=\delta_{ij}2(p_i+\kappa)\mS_{ii}+(1-\delta_{ij})\Big((p_i+p_j+2\kappa)\mS_{ij}+
  \sum\limits_{m\neq i,j}^N\mS_{im}\mS_{mj}\eta(q_i,q_m,q_j)\Big)\,.
  }
 \end{array}
 \eq
 The variable $\kappa$ is free, it will be fixed below.

 Notice that similarly to $\ti O$ the matrix $O$  satisfies the property
  \beq\label{q67}
  \begin{array}{c}
  \displaystyle{
  [\mS,{ O}-c_0P]=0
  }
 \end{array}
 \eq
 due to  $O$ is obtained from $\ti O$ by the shift $p_i\rightarrow p_i+\kappa$:
 $O={\ti O}+2\kappa\mS$, and $\ti O$ satisfies (\ref{q61}).

Let us formulate the main statement.
\begin{theor}
Suppose the condition (\ref{q51}) holds true. The $U$-$V$ pair (\ref{q40}) and (\ref{q64}) satisfies the Zakharov-Shabat equation with the unwanted term
  \beq\label{q68}
  \begin{array}{c}
  \displaystyle{
\partial_{t}{U}^{\hbox{\tiny{2dSpin}}}(z)-k\partial_{x}{V}^{\hbox{\tiny{2dSpin}}}(z)
+[{U}^{\hbox{\tiny{2dSpin}}}(z), {V}^{\hbox{\tiny{2dSpin}}}(z)]=
  }
  \\ \ \\
  \displaystyle{
=\sum\limits_{i\neq j}^N E_{ij}\Big(-c_0\mS_{ij}F'(z,q_{ij})+{\breve H}_{ij} f(z,q_{ij})\Big)
\Big(\mu_i(x)-\mu_j(x)\Big)\,,
  }
 \end{array}
 \eq
where
  \beq\label{q681}
  \begin{array}{c}
  \displaystyle{
  \mu_i(x)=\mS_{ii}-k\p_xq_{i}\,,
  }
  \\ \ \\
  {\breve H}=H-O\in\Mat
 \end{array}
 \eq
%
and provides the following set of equations:
  \beq\label{q69}
  \begin{array}{c}
  \displaystyle{
\p_t p_i+k\p_x { d}_i-c_0\sum\limits_{m\neq i}^N\mS_{im}\mS_{mi}\wp'(q_{im})-
\sum\limits_{m\neq i}^N\Big( \mS_{im}{\breve H}_{mi}-{\breve H}_{im}\mS_{mi} \Big)\wp(q_{im})=0\,,
  }
 \end{array}
 \eq
  \beq\label{q70}
  \begin{array}{c}
  \displaystyle{
\p_t \mS_{ij} -k\p_x{\breve H}_{ij}-2c_0(\mS_{ii}-\mS_{jj})\mS_{ij}\wp(q_{ij})
+(p_i-p_j){\breve H}_{ij}+({d}_i-{d}_j)\mS_{ij}+
  }
  \\
   \displaystyle{
+\sum\limits_{m\neq i,j}^N\Big((\mS_{im}{\breve H}_{mj}-{\breve H}_{im}\mS_{mj})\eta(q_i,q_m,q_j)
 +c_0\mS_{im}\mS_{mj}(\wp(q_{im})-\wp(q_{mj}))\Big)=0
\quad\hbox{for}\ i\neq j\,,
  }
 \end{array}
 \eq
  \beq\label{q71}
  \begin{array}{c}
  \displaystyle{
\p_t \mS_{ii}-k\p_x{\breve H}_{ii}=0\,,
  }
 \end{array}
 \eq
  \beq\label{q72}
  \begin{array}{c}
  \displaystyle{
 \p_t q_i={\breve H}_{ii}
  }
 \end{array}
 \eq
 and
  \beq\label{q73}
  \begin{array}{c}
  \displaystyle{
-c_0k\p_x \mS=[\mS,{\breve H}+c_0P]\,,\quad P={\rm diag}(p_1,...,p_N)\,.
  }
 \end{array}
 \eq
 The matrix $O$ entering the definition of ${\breve H}$ (\ref{q681}) is defined in (\ref{q66}),
 and $d_i$ in (\ref{q65}). Also, the matrix $H$ from (\ref{q681}) is the solution (\ref{q52}) of (\ref{q50})
 with $b=-c_0$ (\ref{q58}), that is
  \beq\label{q731}
  \begin{array}{c}
  \displaystyle{
 H=-\frac{k}{c_0}\,[\mS,\mS_x]\,.
  }
 \end{array}
 \eq
 The equations (\ref{q69})-(\ref{q70}) can be rewritten in the following way:
  \beq\label{q74}
  \begin{array}{c}
  \displaystyle{
\p_t p_i+k\p_x { d}_i+\sum\limits_{m\neq i}^N\mS_{im}\mS_{mi}(\mS_{ii}+\mS_{mm})\wp'(q_{im})-
\sum\limits_{m\neq i}^N\Big( \mS_{im}{H}_{mi}-{H}_{im}\mS_{mi} \Big)\wp(q_{im})=0
  }
 \end{array}
 \eq
 and
  \beq\label{q75}
  \begin{array}{c}
  \displaystyle{
\p_t \mS_{ij} -k\p_x{\breve H}_{ij}-2c_0(\mS_{ii}-\mS_{jj})\mS_{ij}\wp(q_{ij})+
  }
  \\ \ \\
   \displaystyle{
+(p_i-p_j)H_{ij}+\sum\limits_{m\neq i,j}^N\Big((\mS_{im}{H}_{mj}-{H}_{im}\mS_{mj})\eta(q_i,q_m,q_j)+
  }
  \\
   \displaystyle{
+\sum\limits_{m\neq i,j}^N\mS_{im}\mS_{mj}
\Big( (\mS_{ii}+\mS_{mm})\wp(q_{im})-(\mS_{jj}+\mS_{mm})\wp(q_{jm})+(\mS_{ii}-\mS_{jj})\wp(q_{ij}) \Big)=0
  }
 \end{array}
 \eq
 for $i\neq j$ and $i,j=1,...,N$.
\end{theor}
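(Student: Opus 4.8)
The plan is to observe that the matrix $V^{\hbox{\tiny{2dSpin}}}(z)$ defined in (\ref{q63})--(\ref{q64}) is nothing but a special instance of the ansatz (\ref{q43}) treated in Theorem 1, so that the entire content of (\ref{q68}) and (\ref{q69})--(\ref{q73}) can be read off from Theorem 1 by a single substitution. First I would use the Proposition leading to (\ref{q591}): under $\mS^2=c_0\mS$ the matrix $V^0(z)$ coincides with the ansatz (\ref{q43}) in which $b=-c_0$, the free matrix ${\ti H}$ equals $-{\ti O}$, and the diagonal term ${\ti d}_i$ equals $-\big(p_i^2-\sum_{m\neq i}\mS_{im}\mS_{mi}\wp(q_{im})\big)$. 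Adding the remaining two pieces $-2\kappa U^{\hbox{\tiny{2dSpin}}}(z)$ and $\delta_{ij}H_{ii}E_1(z)+(1-\delta_{ij})H_{ij}\phi(z,q_{ij})$, and invoking the shift $O={\ti O}+2\kappa\mS$ noted after (\ref{q67}), I would check coefficient by coefficient in the $\wp(z)$-, $E_1(z)$-, $\phi(z,q_{ij})$-, $F(z,q_{ij})$- and constant-terms that $V^{\hbox{\tiny{2dSpin}}}(z)$ is again the ansatz (\ref{q43}), now with $b=-c_0$, with ${\ti H}$ replaced by ${\breve H}=H-O$, and with diagonal term $-d_i$, where $d_i$ is (\ref{q65}).

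With this identification in hand, Theorem 1 applies verbatim. Substituting $b=-c_0$, ${\ti H}\to{\breve H}$ and ${\ti d}_i\to-d_i$ into the right-hand side of (\ref{q44}) reproduces the unwanted term (\ref{q68}), and the same substitution turns (\ref{q45})--(\ref{q49}) into (\ref{q69})--(\ref{q73}); in particular the flip ${\ti d}_i\to-d_i$ converts $-({\ti d}_i-{\ti d}_j)\mS_{ij}$ into $+(d_i-d_j)\mS_{ij}$ in the $\mS$-equation. At this stage the only equation that actually constrains ${\breve H}$ is (\ref{q73}), and I would verify its consistency with the explicit choice ${\breve H}=H-O$, $H=-\frac{k}{c_0}[\mS,\mS_x]$. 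Writing ${\breve H}+c_0P=H-(O-c_0P)$ and using (\ref{q67}), i.e. $[\mS,O-c_0P]=0$, collapses $[\mS,{\breve H}+c_0P]$ to $[\mS,H]$, while $H$ from (\ref{q731}) solves (\ref{q50}) with $b=-c_0$ by (\ref{q52}), so that $[\mS,H]=-c_0k\p_x\mS$. Hence (\ref{q73}) becomes an identity for this ${\breve H}$ and the ansatz closes.

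It remains to bring (\ref{q69})--(\ref{q70}) to the form (\ref{q74})--(\ref{q75}). Here I would substitute ${\breve H}=H-O$ and split every antisymmetric combination $\mS_{im}{\breve H}_{mi}-{\breve H}_{im}\mS_{mi}$ and its off-diagonal analogue into an $H$-part, which is kept, and an $O$-part. Because the $2\kappa\mS$ summand of $O={\ti O}+2\kappa\mS$ cancels in every such antisymmetric combination, and because the residual $2\kappa$-contribution of $-(p_i-p_j)O_{ij}$ is cancelled by the $2\kappa p_i$ carried inside $d_i$, the surviving $O$-terms collapse onto the ${\ti O}$-expressions standing on the left-hand sides of (\ref{q611}) and (\ref{q612}). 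Applying (\ref{q611}) then converts (\ref{q69}) into (\ref{q74}), and applying (\ref{q612}) converts (\ref{q70}) into (\ref{q75}).

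The computational heart of the argument is thus not Theorem 2 itself, which is essentially the substitution above, but the two nontrivial identities (\ref{q611})--(\ref{q612}) on which the final rewriting rests; establishing them is where I expect the real work to lie. As indicated in the remark after (\ref{q61}), one computes $[U^{\hbox{\tiny{2dSpin}}}(z),V^0(z)]$ twice, once from the quadratic representation (\ref{q57}) and once from the representation (\ref{q591}), and matches the two results using the elliptic-function identities of the Appendix together with the skew-symmetry (\ref{q62}) of $\eta$; comparing the two expansions then yields (\ref{q61}) and the pair (\ref{q611})--(\ref{q612}). Apart from that, the only delicate point in assembling Theorem 2 is the meticulous tracking of signs and of the free parameter $\kappa$ through the shift $O={\ti O}+2\kappa\mS$.
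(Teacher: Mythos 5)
Your proposal is correct and follows essentially the same route as the paper: rewrite $V^{\hbox{\tiny{2dSpin}}}(z)$ as an instance of the Theorem~1 ansatz with $b=-c_0$, ${\ti H}\to{\breve H}=H-O$, ${\ti d}_i\to -d_i$, apply Theorem~1, use (\ref{q67}) to reduce (\ref{q73}) to $[\mS,H]=-c_0k\p_x\mS$, and invoke (\ref{q611})--(\ref{q612}) to pass to (\ref{q74})--(\ref{q75}). Your explicit tracking of the $2\kappa$-shift (cancellation of $2\kappa\mS$ in the antisymmetric combinations and of $2\kappa(p_i-p_j)\mS_{ij}$ against the $2\kappa p_i$ inside $d_i$) is just a spelled-out version of the paper's remark that $\ti O\to O$ leaves the identities intact because $[U,V^0-2\kappa U]=[U,V^0]$.
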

\begin{proof}
Let us combine similar terms in ${V}^{\hbox{\tiny{2dSpin}}}(z)$ (\ref{q64}):
  \beq\label{q76}
  \begin{array}{c}
  \displaystyle{
 V^{\hbox{\tiny{2dSpin}}}(z)=\delta_{ij}\Big(-d_{i}+({ H}_{ii}-O_{ii})E_1(z)
  -c_0\mS_{ii}\wp(z)\Big)+
  }
  \\ \ \\
   \displaystyle{
 +(1-\delta_{ij})\Big(-c_0\mS_{ij}F(z,q_{ij})+(H_{ij}-{O}_{ij})\phi(z,q_{ij})\Big)\,.
  }
 \end{array}
 \eq
 Then we apply the Theorem 1, where $b\rightarrow -c_0$, ${\ti H}\rightarrow {\breve H}=H-O$ and ${\ti d}_i\rightarrow -d_i$. This yields (\ref{q69})-(\ref{q73}).
 Notice that due to Proposition 2 and (\ref{q67}) the relation (\ref{q73}) transforms as:
  \beq\label{q77}
  \begin{array}{c}
  \displaystyle{
-c_0k\p_x \mS=[\mS,{\breve H}+c_0P]=[\mS,H+c_0P-O]=[\mS,H]\,,
  }
 \end{array}
 \eq
 that is $H$ is indeed given by (\ref{q731}). Next, consider equation (\ref{q72})\footnote{It is, in fact, includes
 an arbitrary function $g(t,x)$ similarly to (\ref{q48}) which we again fix to be zero.}.
 Due to (\ref{q681}), (\ref{q66}) and (\ref{q731}) from (\ref{q72}) we have
  \beq\label{q78}
  \begin{array}{c}
  \displaystyle{
 {\dot q}_i=-2(p_i+\kappa)\mS_{ii}-\frac{k}{c_0}\,[\mS,\mS_x]_{ii}\,,\quad i=1,...,N\,.
  }
 \end{array}
 \eq
 Notice that this expression for ${\dot q}_i$ is different from (\ref{q54}) and is similar to (\ref{q36}).

 Finally, the equations (\ref{q69})-(\ref{q70}) are rewritten as (\ref{q74})-(\ref{q75}) by using
 (\ref{q611}) and (\ref{q612}) from the Proposition 3. A remark is that the relations (\ref{q611}) and (\ref{q612})
 are written for the matrix $\ti O$, and they are valid for the matrix $O$ as well since
 transition $\ti O\rightarrow O$ in these formulae corresponds to
 computation in two ways of $[U^{\hbox{\tiny{2dSpin}}},V^0(z)-2\kappa U^{\hbox{\tiny{2dSpin}}}]=[U^{\hbox{\tiny{2dSpin}}},V^0(z)]$ as was explained after Proposition 2.
\end{proof}

Let us now fix the parameter $\kappa$. We choose it in such a way that $\sum\limits_{i=1}^N {\dot q}_i=0$.
Summing up equations (\ref{q78}) we obtain
  \beq\label{q781}
  \begin{array}{c}
  \displaystyle{
 \kappa=-\frac{1}{\tr(\mS)}\sum\limits_{i=1}^N\mS_{ii}p_i\,.
  }
 \end{array}
 \eq
Notice also that with this choice the sum $\sum\limits_{i=1}^N {q}_i$ is a function of $x$. It depends on initial conditions. Let us fix it be equal to zero:
  \beq\label{q782}
  \begin{array}{c}
  \displaystyle{
 \sum\limits_{i=1}^N {q}_i=0\,.
  }
 \end{array}
 \eq
 At the same time the sum of momenta is not zero. This is similar to the spinless case \cite{Krich22}.

It follows from (\ref{q71}) and (\ref{q72}) that $\p_t\mu_i(x)=0$. In principle, one may fix $\mu_i(x)$ to be a function of $x$, but we choose the level of the moment map to be independent of $x$: $\mu_i(x)=c$.

\paragraph{Consistency with the finite-dimensional model.}
 The form (\ref{q74})-(\ref{q75}) of equations of motion is useful
for the limit to finite-dimensional case. In this limit all the fields become $x$-independent variables.
That is all derivatives with respect to $x$ vanishes, or equivalently, one should put $k=0$. The constraints
$\mu_i(x)=c$ turn into $S_{ii}=c$ (the latter follows from (\ref{q71})). Then the equations (\ref{q74})-(\ref{q75}) on-shell the conditions $S_{ii}=c$ are as follows:
  \beq\label{q741}
  \begin{array}{c}
  \displaystyle{
{\dot p}_i=-2c\sum\limits_{m\neq i}^NS_{im}S_{mi}\wp'(q_{im})
  }
 \end{array}
 \eq
 and
  \beq\label{q79}
  \begin{array}{c}
  \displaystyle{
{\dot S}_{ij} =-2c\sum\limits_{m\neq i,j}^NS_{im}S_{mj}(\wp(q_{im})-\wp(q_{jm}))\,,
  }
 \end{array}
 \eq
which are the equations (\ref{q04}) with $\ti b=-2c$.

\section{Reduction to the spinless field Calogero-Moser model}
\setcounter{equation}{0}
Here we show that the obtained equations of motion for the spin field Calogero-Moser model
are in agreement with the equations for the spinless model (\ref{q07})-(\ref{q08}) from \cite{Krich22}.
For this purpose we perform a reduction to the spinless model.
Since the Hamiltonian is unknown in the field case, our strategy is to use equations of motion
similarly to the calculation (\ref{q293})-(\ref{q297}).

Suppose
\beq\label{q80}
  \begin{array}{c}
  \displaystyle{
 \mS_{ij}=\xi_i\psi_j\,.
 }
 \end{array}
 \eq
The reduction includes the constraints $\mS_{ii}=\al_i^2$ (\ref{q42}) together with some
gauge fixation, which we choose to be $\xi_i=\psi_i$, that is
\beq\label{q81}
  \begin{array}{c}
  \displaystyle{
 {\rm on\ shell}:\ \xi_i=\psi_i=(kq_{i,x}+c)^{1/2}=\al_i\,,\quad i=1,...,N\,.
 }
 \end{array}
 \eq
Then we perform the gauge transformation
  \beq\label{q251}
  \begin{array}{c}
  \displaystyle{
U^{\hbox{\tiny{2dSpin}}}(z)\rightarrow g^{-1}U^{\hbox{\tiny{2dSpin}}}(z)g-k g^{-1}{\p_x g}\,,
 }
 \end{array}
 \eq
  \beq\label{q252}
  \begin{array}{c}
  \displaystyle{
V^{\hbox{\tiny{2dSpin}}}(z)\rightarrow g^{-1}V^{\hbox{\tiny{2dSpin}}}(z)g-g^{-1}{\dot g}
 }
 \end{array}
 \eq
 with $g={\rm diag}(\xi_1,...,\xi_N)$. As a result of this gauge transformation, the matrix $U$
 depends on $\mS_{ii}$ (which are functions of $q_i$ due to (\ref{q71})-(\ref{q72}))
and $\p_x\xi_i/\xi_i$. Therefore, the spin variables are gauged away similarly to the finite-dimensional case,
and the values of $\p_x\xi_i/\xi_i$ depend on the gauge fixation.
  The $U$-$V$ pair for the spinless model appear as
\beq\label{q82}
  \begin{array}{c}
  \displaystyle{
U^{\hbox{\tiny{2dCM}}}(z)=\Big(g^{-1}U^{\hbox{\tiny{2dSpin}}}(z)g-k g^{-1}{\p_x g}\Big)\Big|_{\rm on\ shell}
 }
 \end{array}
 \eq
 and
\beq\label{q83}
  \begin{array}{c}
  \displaystyle{
V^{\hbox{\tiny{2dCM}}}(z)=\Big(g^{-1}V^{\hbox{\tiny{2dSpin}}}(z)g-g^{-1}{\dot g}\Big)\Big|_{\rm on\ shell}\,.
 }
 \end{array}
 \eq
 Since
\beq\label{q84}
  \begin{array}{c}
  \displaystyle{
 \Big(kg^{-1}{\p_x g}\Big)_{ij}=\delta_{ij}\frac{k\xi_{i,x}}{\xi_i}+k\lambda_x
 }
 \end{array}
 \eq
it is easy to see that the expression in the r.h.s.  of (\ref{q82}) coincides with (\ref{q31})
up to $k\lambda_x$, which comes from the freedom in definition of $\xi_i$, $\psi_j$ ($\xi_i\rightarrow e^\lambda\xi_i$, $\psi_j\rightarrow e^{-\lambda}\psi_j$). But this input can be gauged away and we put $\lambda=0$.

Due to (\ref{q80}) the matrix $\mS$ on shell the constraints (\ref{q81}) has the form
\beq\label{q85}
  \begin{array}{c}
  \displaystyle{
 \mS_{ij}=\xi_i\psi_j\Big|_{\rm on\ shell}=\al_i\al_j\,.
 }
 \end{array}
 \eq
Below we assume $\mS_{ij}$ to be on shell constrains. From (\ref{q301}) and (\ref{q782}) we conclude that
\beq\label{q86}
  \begin{array}{c}
  \displaystyle{
 \sum\limits_{i=1}^N \al_i^2=Nc\,,
 }
 \end{array}
 \eq
and, therefore,
\beq\label{q87}
  \begin{array}{c}
  \displaystyle{
  \mS^2=Nc\mS\,,
 }
 \end{array}
 \eq
that is
\beq\label{q88}
  \begin{array}{c}
  \displaystyle{
  c_0=Nc\,.
 }
 \end{array}
 \eq
Therefore, the function $\kappa$ turns into
\beq\label{q89}
  \begin{array}{c}
  \displaystyle{
  \kappa\Big|_{\rm on\ shell}=-\frac{1}{Nc}\sum\limits_{i=1}^N\al_i^2 p_i
 }
 \end{array}
 \eq
 as in (\ref{q302}). Also, it follows from (\ref{q86}) that
\beq\label{q90}
  \begin{array}{c}
  \displaystyle{
 \sum\limits_{i=1}^N \al_i\al_{i,x}=0\,.
 }
 \end{array}
 \eq
 Then $(\mS_x\mS)_{ij}=Nc\al_{i,x}\al_j$, $(\mS\mS_x)_{ij}=Nc\al_{i}\al_{j,x}$ and for
 the matrix $H$ (\ref{q731}) using also (\ref{q88}) we obtain
\beq\label{q91}
  \begin{array}{c}
  \displaystyle{
 H_{ij}\Big|_{\rm on\ shell}=k\al_{i,x}\al_j-k\al_{i}\al_{j,x}\,.
 }
 \end{array}
 \eq
 To finish the derivation of $U$-$V$ pair for the spinless model we need to show that the r.h.s. of
 (\ref{q83}) indeed coincides with $V^{\hbox{\tiny{2dCM}}}(z)$ (\ref{q32})-(\ref{q34}).
 Most of terms are easily obtained by substitution of (\ref{q85})-(\ref{q91}) into (\ref{q76})
 except the diagonal $z$-independent part, which is transformed under the gauge transformation
 as
\beq\label{q92}
  \begin{array}{c}
  \displaystyle{
 d_i\rightarrow d_i+\frac{{\dot \xi}_i}{\xi_i}\Big|_{\rm on\ shell}\,.
 }
 \end{array}
 \eq
 Let us compute $\frac{{\dot \xi}_i}{\xi_i}\Big|_{\rm on\ shell}$.  Due to (\ref{q293})
 we need to compute the r.h.s. of ${\dot \mS}_{ij}$ on shell the constraints. Using equations of motion
 (\ref{q75}) it is straightforwardly verified that
\beq\label{q93}
  \begin{array}{c}
  \displaystyle{
 \frac{{\dot \xi}_i}{\xi_i}\Big|_{\rm on\ shell}=k^2\frac{\al_{i,xx}}{\al_i}-2(p_i+\kappa)k\frac{\al_{i,x}}{\al_i}
 -k\p_x(p_i+\kappa)-
 }
 \\ \ \\
   \displaystyle{
 -2\sum\limits_{m\neq i}\al_m^4\wp(q_{im})-4k\sum\limits_{m\neq i}\al_m\al_{m,x}\zeta(q_{im})\,.
 }
 \end{array}
 \eq
 In this way the $V$-matrix (\ref{q32}) is reproduced.

\section{Appendix: elliptic functions}\label{secA}
\def\theequation{A.\arabic{equation}}
\setcounter{equation}{0}

Main object is the elliptic Kronecker function \cite{Weil}:
 \beq\label{a01}
  \begin{array}{l}
  \displaystyle{
 \phi(z,u)=\frac{\vth'(0)\vth(z+u)}{\vth(z)\vth(u)}=\phi(u,z)\,,\quad
 \res\limits_{z=0}\phi(z,u)=1\,,
  \quad \phi(-z, -u) = -\phi(z, u)\,,
 }
 \end{array}
 \eq
where $\vth(z)$ is the first Jacobi theta-function. In the Riemann notations it is
 \beq\label{a02}
 \begin{array}{c}
  \displaystyle{
\vth(z)=\vth(z,\tau)\equiv-\theta{\left[\begin{array}{c}
1/2\\
1/2
\end{array}
\right]}(z|\, \tau )\,,
 }
 \end{array}
 \eq
\beq\label{a03}
 \begin{array}{c}
  \displaystyle{
\theta{\left[\begin{array}{c}
a\\
b
\end{array}
\right]}(z|\, \tau ) =\sum_{j\in \z}
\exp\left(2\pi\imath(j+a)^2\frac\tau2+2\pi\imath(j+a)(z+b)\right)\,,\quad {\rm Im}(\tau)>0\,,
}
 \end{array}
 \eq
where $\tau$ is the moduli of elliptic curve $\mC/(\mZ+\tau\mZ)$.

The partial derivative $f(z,u) = \partial_u \vf(z,u)$ is
\beq\label{a04}
\begin{array}{c} \displaystyle{

    f(z, u) = \phi(z, u)(E_1(z + u) - E_1(u)), \qquad f(-z, -u) = f(z, u)
}\end{array}\eq
in terms of the first
  Eisenstein function:
\beq\label{a05}
\begin{array}{c} \displaystyle{
    E_1(z)=\frac{\vth'(z)}{\vth(z)}=\zeta(z)+\frac{z}{3}\frac{\vth'''(0)}{\vth'(0)}\,,
    \quad
    E_2(z) = - \partial_z E_1(z) = \wp(z) - \frac{\vartheta'''(0) }{3\vartheta'(0)}\,,
}\end{array}\eq
\beq\label{a06}
\begin{array}{c}
 \displaystyle{
    E_1(- z) = -E_1(z)\,, \quad E_2(-z) = E_2(z)\,,
}\end{array}
\eq
where $\wp(z)$ and $\zeta(z)$ are the Weierstrass functions.
 We use the following addition formulae:
\beq\label{a07}
  \begin{array}{c}
  \displaystyle{
  \phi(z_1, u_1) \phi(z_2, u_2) = \phi(z_1, u_1 + u_2) \phi(z_2 - z_1, u_2) + \phi(z_2, u_1 + u_2) \phi(z_1 - z_2, u_1)\,,
 }
 \end{array}
 \eq
\beq\label{a08}
  \begin{array}{c}
  \displaystyle{
 \phi(z,u_1)\phi(z,u_2)=\phi(z,u_1+u_2)\Big(E_1(z)+E_1(u_1)+E_1(u_2)-E_1(z+u_1+u_2)\Big)\,,
 }
 \end{array}
 \eq
\beq\label{a09}
  \begin{array}{c}
  \displaystyle{
  \phi(z, u_1) f(z,u_2)-\phi(z, u_2) f(z,u_1)=\phi(z,u_1+u_2)\Big(\wp(u_1)-\wp(u_2)\Big)\,,
 }
 \end{array}
 \eq
\beq\label{a10}
  \begin{array}{c}
  \displaystyle{
  \phi(z, u) \phi(z, -u) = \wp(z)-\wp(u)=E_2(z)-E_2(u)\,,
 }
 \end{array}
 \eq
\beq\label{a11}
  \begin{array}{c}
  \displaystyle{
  \phi(z, u) f(z, -u)-\phi(z, -u) f(z, u)=\wp'(u)\,.
 }
 \end{array}
 \eq
The second order partial derivative
$f'(z,u) = \partial^2_u \phi(z,u)$:
\beq\label{a111}
\begin{array}{c}
 \displaystyle{
    f'(z,u)=\phi(z,u)\Big(\wp(z)-E_1^2(z)+2\wp(u)-2E_1(z)E_1(u)+2E_1(z+u)E_1(z)\Big)=
 }
 \\
 \displaystyle{
   =\Big(\wp(z)-E_1^2(z)+2\wp(u)\Big)\phi(z,u)+2E_1(z)f(z,u) \,.
}\end{array}
\eq
Introduce the function
\beq\label{a12}
  \begin{array}{c}
  \displaystyle{
 F(z,u)=E_1(z)\phi(z,u)-f(z,u)=\Big(E_1(z)+E_1(u)-E_1(z+u)\Big)\phi(z,u)\,.
 }
 \end{array}
 \eq
Then
\beq\label{a13}
  \begin{array}{c}
  \displaystyle{
 F'(z,u)\equiv\p_u F(z,u)=-\Big(\wp(z)-E_1^2(z)+2\wp(u)\Big)\phi(z,u)-E_1(z)f(z,u)=
  }
 \\
 \displaystyle{
 =E_1(z)F(z,u)-\Big(\wp(z)+2\wp(u)\Big)\phi(z,u)\,.
 }
 \end{array}
 \eq
Also,
\beq\label{a14}
  \begin{array}{c}
  \displaystyle{
 \phi(z,u_1)f(z,u_2)=\phi(z,u_1+u_2)\Big(E_1^2(z)-\wp(z)-\wp(u_1+u_2)-\wp(u_2)\Big)-
  }
 \\
 \displaystyle{
 -E_1(z)f(z,u_1+u_2)+\Big(E_1(u_1)+E_1(u_2)-E_1(u_1+u_2)\Big)f(z,u_1+u_2)
 }
 \end{array}
 \eq
and
\beq\label{a15}
  \begin{array}{c}
  \displaystyle{
   \Big(\zeta(z+w)-\zeta(z)-\zeta(w)\Big)^2=\Big(E_1(z+w)-E_1(z)-E_1(w)\Big)^2=\wp(z)+\wp(w)+\wp(z+w)\,.
 }
 \end{array}
 \eq
%


\subsection*{Acknowledgments}
This work was performed at the Steklov International Mathematical Center and supported by the Ministry of Science and Higher Education of the Russian Federation (agreement no. 075-15-2022-265).




\begin{small}

\end{small}

\end{document}